\newtheorem{lemma}{Lemma}
\newtheorem{remark}{Remark}
\newtheorem{theorem}{Theorem}
\def\@email#1#2{%
 \endgroup
 \patchcmd{\titleblock@produce}
  {\frontmatter@RRAPformat}
  {\frontmatter@RRAPformat{\produce@RRAP{*#1\href{mailto:#2}{#2}}}\frontmatter@RRAPformat}
  {}{}
}%
\begin{document}

\title{Koopman spectral analysis of elementary cellular automata}

	\author{Keisuke Taga}
	\email{tagaksk@akane.waseda.jp}
	\affiliation{Department of Physics, School of Advanced Science and Engineering, Waseda University, Tokyo 169-8555, Japan}

	\author{Yuzuru Kato}
	\affiliation{Department of Systems and Control Engineering,
		Tokyo Institute of Technology, Tokyo 152-8552, Japan}

	\author{Yoshinobu Kawahara}
	\affiliation{Institute of Mathematics for Industry, Kyushu University, 
	and Center for Advanced Intelligence Project, RIKEN, Fukuoka 819-0395, Japan}

	\author{Yoshihiro Yamazaki}
	\affiliation{Department of Physics, School of Advanced Science and Engineering, Waseda University, Tokyo 169-8555, Japan}

	\author{Hiroya Nakao}
	\affiliation{Department of Systems and Control Engineering,
	Tokyo Institute of Technology, Tokyo 152-8552, Japan}
	\date{\today}

\begin{abstract}
We perform a Koopman spectral analysis of elementary cellular automata (ECA).
By lifting the system dynamics using a one-hot representation of the system state, we derive a matrix representation of the Koopman operator as the transpose of the adjacency matrix of the state-transition network.
The Koopman eigenvalues are either zero or on the unit circle in the complex plane, and the associated Koopman eigenfunctions can be explicitly constructed.
From the Koopman eigenvalues, we can judge the reversibility, determine the number of connected components in the state-transition network, evaluate the period of asymptotic orbits, and derive the conserved quantities for each system.
We numerically calculate the Koopman eigenvalues of all rules of ECA on a one-dimensional lattice of 13 cells with periodic boundary conditions. It is shown that the spectral properties of the Koopman operator reflect Wolfram’s classification of ECA.

\end{abstract}

\maketitle
\textbf{The Koopman operator theory for analyzing nonlinear dynamical systems has attracted much attention recently. In this paper, we introduce a Koopman spectral analysis of elementary cellular automata (ECA), which have been studied as models of real-world systems such as pigmentation patterns on shells, peeling patterns of adhesive tapes, and congestion dynamics of traffic flows. We derive a finite-dimensional representation of the Koopman operator for ECA and construct the Koopman eigenvalues and Koopman eigenfunctions explicitly. The Koopman analysis of ECA can reveal fundamental properties of the system, such as reversibility and conserved quantities. We also numerically calculate the Koopman eigenvalues of all rules of ECA of 13 cells with periodic boundary conditions and show that the spectral properties reflect Wolfram’s classification of ECA. Our Koopman analysis of ECA can provide insights into the relationship between the spatiotemporal dynamics and spectral properties of ECA and also serve as a testbed for examining various algorithms of dynamic mode decomposition (DMD) for estimating the Koopman eigenvalues using timeseries data observed from spatially extended systems.}

\section{INTRODUCTION}

The Koopman operator theory~\cite{koopman1931hamiltonian,vonneumann1932,mezic2005spectral,budivsic2012applied,mauroy2020koopman,kutz2016dynamic,bollt2018matching} 
for analyzing nonlinear dynamical systems has attracted much attention recently. For a given dynamical system, the Koopman operator analysis focuses on the evolution of observables rather than the system state itself.
As the Koopman operator is linear even if the system dynamics is nonlinear, it provides a method to analyze nonlinear dynamical systems by using standard spectral methods for linear systems.
On the other hand, because the Koopman operator acts on a function space of observables, it can be generally infinite-dimensional and difficult to analyze.
Analytical solutions to the eigenvalue problems of the Koopman operator can be obtained only in a limited class of linearizable or exactly solvable systems. Koopman spectral analysis of partial differential equations (PDEs) describing spatiotemporal patterns has also been considered, but analytical results can be obtained only for solvable PDEs~\cite{nathan2018applied,page2018koopman,nakao2020spectral,parker2020koopman}.
In this study, as the simplest system exhibiting spatiotemporal dynamics, we introduce the Koopman operator analysis to elementary cellular automata (ECA)~\cite{moore1962machine,wolfram1983statistical,wolfram2002new,martinez2013note,coombes2009geometry,nishinari1998analytical,kari2005theory}. 
ECA are described by discrete space-time variables and binary state variables, which can exhibit rich dynamics including periodic and chaotic ones despite the simplicity of the evolution rules.
The pattern dynamics of all 256 rules of ECA have been classified qualitatively into four classes by Wolfram~\cite{wolfram2002new}.
Some of the ECA rules have also been studied as models of real-world systems such as pigmentation patterns on shells, peeling patterns of adhesive tapes, and congestion dynamics of traffic flows~\cite{wolfram2002new,coombes2009geometry,nishinari1998analytical,ohmori2019comments}.

Because the state space of ECA on a finite lattice is finite-dimensional, we can explicitly represent the Koopman operator of ECA by a finite-size matrix.
This allows us to rigorously derive fundamental relationships between the dynamical properties of the ECA and the spectral properties of the Koopman operator, and also to numerically obtain the complete set of the Koopman eigenvalues of ECA for small systems.
These exact results will also serve as a testbed for examining various algorithms of dynamic mode decomposition (DMD) for estimating the Koopman eigenvalues using time-series data observed from spatially extended dynamical systems~\cite{schmid2010dynamic,rowley2009spectral,tu2013dynamic,kutz2016dynamic,kawahara2016dynamic,korda2017data,takeishi2017learning,mauroy2020koopman}.

This paper is organized as follows. In Sec. II, we briefly explain ECA and introduce the one-hot representation. In Sec. III, we introduce the Koopman operator for ECA and derive the matrix representation. In Sec. IV, we explicitly construct the Koopman eigenfunctions. In Sec. V, we perform a thorough Koopman spectral analysis for all rules of ECA with 13 cells. Section VI discusses the number of eigenvalues and dynamic mode decomposition, and section VII gives a summary. In Appendix, we provide a brief discussion on the Perron-Frobenius operator.

\section{ElLEMENTARY CELLULAR AUTOMATA}

\subsection{Wolfram's classification}

A cellular automaton is a dynamical system whose space, time, and state variables are all discrete. In this study, we analyze ECA, the simplest case of the cellular automata on a one-dimensional lattice~\cite{wolfram2002new}.
In ECA, the state of each cell on the lattice takes either $0$ or $1$ and evolves with time under a given rule. The next state of each cell is determined by the current states of itself and the two neighboring cells.
Despite the simplicity of the rules, ECA can exhibit various pattern dynamics including homogeneous, pulse-like, traveling, oscillating, chaotic, and complex behaviors~\cite{wolfram2002new}.
Some of the ECA rules are directly associated with partial differential equations (PDEs) describing real-world systems. For example, rule 18 of ECA exhibits Sierpinski-gasket patterns similar to those of the Gray-Scott PDE for a chemical reaction~\cite{coombes2009geometry}, and rule 184 of ECA, which can be derived from the Burgers PDE by ultradiscretization, is discussed as a model of traffic flows~\cite{tokihiro1996soliton,nishinari1998analytical}.

\begin{figure}[h!]
    \centering
    \includegraphics[width=0.85\hsize]{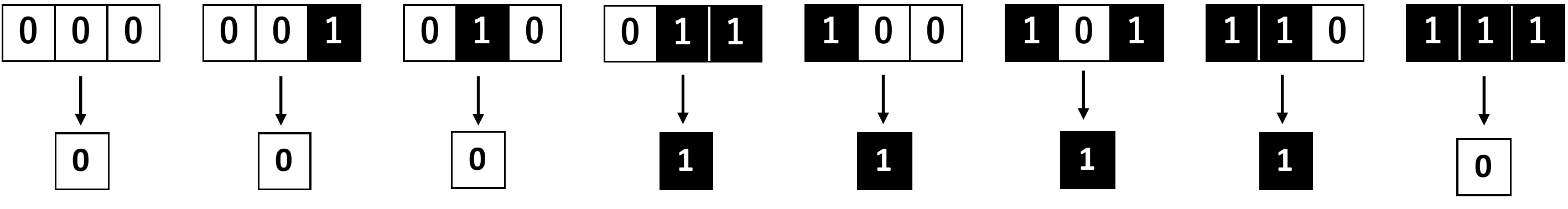}
	\caption{Rule 120 of ECA.}
    \label{fig1}
\end{figure}

In Fig.~\ref{fig1}, rule 120 of ECA on a one-dimensional lattice
is shown as an example, where we assign $0$ to white cells and $1$ to black cells, respectively.
The states of the cells at the next time step can be specified by the $8$ digits, which gives $120$ in decimal numbers when interpreted as an $8$-bit binary number.
Thus, ECA on a one-dimensional lattice possess $2^{2^3} = 2^8=256$ possible rules in total.
Among them, there are equivalent rules that can be transformed to each other by simple transformations, e.g., by spatial reflection; hence, $88$ rules are mutually independent~\cite{Li1990equivalent}.

Wolfram qualitatively classified these rules into four classes based on their typical dynamics in sufficiently large systems as follows~\cite{wolfram2002new}:
\begin{itemize}{}
\item Class I: The system converges to a homogeneous stationary state.

\item Class II: The system converges to an inhomogeneous stationary state or a periodic state.

\item Class III: The system exhibits chaotic dynamics.

\item Class IV: The system exhibits a complex mixture of regular and chaotic dynamics.

\end{itemize}
Here, the term chaotic or complex describes the visual appearance of the system dynamics.
The chaotic properties of the system can also be characterized in the sense that two nearby system states separate from each other exponentially~\cite{grassberger1986long,bagnoli1992damage}.
As the state space of ECA on a finite lattice is finite, all orbits eventually behave periodically also in class III and class IV, but their periods can be extremely large.

Wolfram’s classification captures the main characteristics of ECA, but it is essentially qualitative as it is based on visual inspection of the spatiotemporal patterns.
Thus, various criteria for more quantitative classification of ECA have been proposed~\cite{kari2005theory,martinez2013note}.
In this study, we analyze the spectral property of the Koopman operator of ECA and discuss their relationship with Wolfram’s
classification.

\subsection{State space of ECA}

The ECA can be considered a discrete dynamical system of the form
\begin{align}
    {\bm x}_{n+1} = {\bm F}({\bm x}_n),
\end{align}
where ${\bm x}_n \in M$ represents the system state at discrete time $n$ and the map ${\bm F} : M \to M$ represents the dynamics of the system state.
When the system has $N$ cells, the state space $M$ is a set of $N$-dimensional binary vectors, i.e., $M = \{ \bm{x} = (x_1, ..., x_N) \ | \ x_{1, ..., N} \in \{0, 1\} \}$.
The total number of possible system states is $|M| = 2^N$.

We can also assign each system state ${\bm x} = (x_1, ..., x_N) \in M$ an index $q \in \{1, ..., 2^N\}$ as 
\begin{align}
q = \sum_{j=1}^N 2^{j-1} x_j + 1.
\end{align}
For example, the state ${\bm x} = (0, 1, 0)$ of 3-cell ECA is indexed as $q=3$.
Using this index, we can also represent the system state as an element in the set of $2^N$ integers, $\{ 1, ..., 2^N \}$.
We denote the system state in $M$ with index $q$ as ${\bm x}^{(q)}$ in the following discussion.

The transitions between the system states can be represented by a directed {\it state-transition network}, as illustrated in Fig.~\ref{fig2} for rule $120$ on the minimal lattice with $3$ cells and periodic boundary conditions (the leftmost and rightmost cells are adjacent to each other), where the network nodes represent individual system states and directed links represent the transitions between the states determined by the ECA rule.    
The states such as ${\bm x} = (1,1,0)$, $(1,0,1)$, and $(0,1,1)$, which cannot be reached from any states, are called the {\it garden-of-Eden states}~\cite{moore1962machine,kari2005theory} (sources  without self loops).
ECA without garden-of-Eden states are known to be {\it reversible}~\cite{kari2005theory}, 
namely, the map ${\bm F}$ is bijective and there exists another rule represented by ${\bm F}^{-1}$ such that ${\bm x} = {\bm F}^{-1}({\bm F}({\bm x})) = {\bm F}({\bm F}^{-1}({\bm x}))$ for all ${\bm x} \in M$.

As illustrated in Fig.~\ref{fig2}, each network of ECA consists of several weakly connected components (maximal connected subnetworks), i.e., the maximal sets of states that are connected with each other when the directions of the links are neglected. 
In each connected component, only a single periodic orbit (including a period-$1$ stationary state) exists, and the system states not included in this periodic orbit belong to a tree subnetwork whose root is a state in the periodic orbit and whose leaves (terminal nodes) are garden-of-Eden states.
For example, there are two connected components A and B in Fig.~\ref{fig2}, each with a single periodic orbit of period 3 (A) or 1 (B), and the component B on the right has a period-$1$ orbit and a tree subnetwork with three garden-of-Eden states.
This is because the rule of ECA is deterministic and each state has only a single outgoing link; hence, no more than two periodic orbits can exist in the same connected component.

The system states not included in the periodic orbit converge to the periodic orbit in the same connected component within finite steps, because if an orbit that does not converge to the periodic orbit exists, the corresponding time series should include overlapping states when it becomes longer than $2^N$ by the pigeonhole principle.
Thus, the periodic orbit is an attractor and the connected component of the network is the basin of attraction.

\begin{figure}[b]
\begin{center}
\includegraphics[width=0.85\hsize]{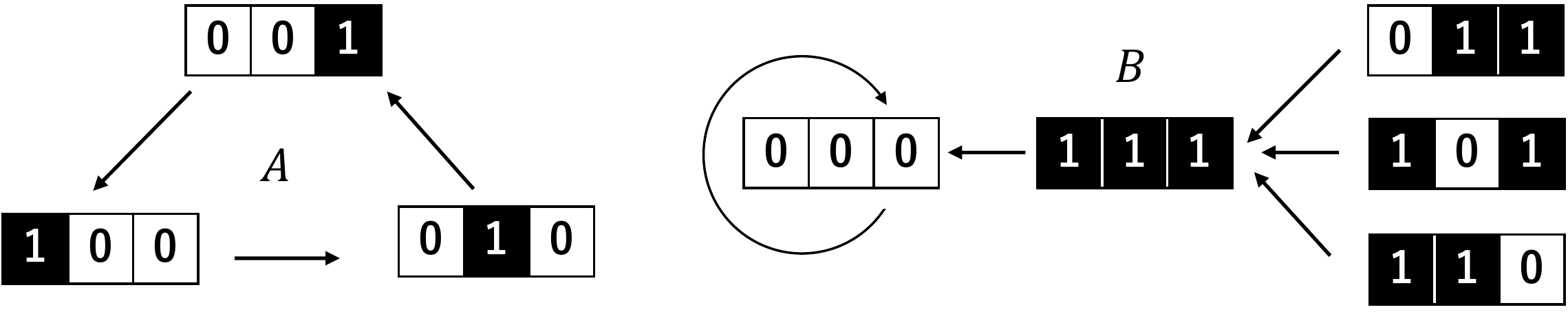}
\caption{State-transition network of ECA with 3 cells and periodic boundary conditions for rule 120, which comprises two subnetworks (connected components).}
\label{fig2}
\end{center}
\end{figure}

\section{KOOPMAN SPECTRAL ANALYSIS OF ECA}

\subsection{Koopman operator}

We now introduce the Koopman operator for ECA.
An observable $G : M \to {\mathbb C}$ is a function that maps the system state ${\bm x} \in M$ to an observed value $G({\bm x}) \in {\mathbb C}$. 
The Koopman operator $\hat{K}$ describes the evolution of the observable $G$ and is defined as 
\begin{align}
(\hat{K} G)({\bm x}) = G \circ {\bm F}({\bm x})=G({\bm F}({\bm x}))
\end{align}
for ${\bm x} \in M$, where $\circ$ denotes composition of functions.
This equation means that instead of evolving the system state as ${\bm x} \to {\bm F}({\bm x})$ and observe it by a fixed observable $G$ to obtain the measurement results $G( {\bm x} ) \to G( {\bm F}({\bm x}) )$, we consider that the system state is fixed at some ${\bm x}$ but rather the observable $G$ evolves such that the same measurement results $G( {\bm x} ) \to (\hat{K} G)( {\bm x} )$ are obtained.

For systems with continuous states, the space of the observables can be infinite-dimensional. However, for ECA with $N$ cells, the total number of the system states is $2^N$; hence, the Koopman operator can be represented by a finite-dimensional $2^N \times 2^N$ matrix, as we show in Subsection III B.

The linearity of the Koopman operator follows from the definition; specifically, 
\begin{align}
\{ \hat{K}( a G + b H ) \} ({\bm x})
&= a G ({\bm F} ({\bm x})) + b H ({\bm F}({\bm x})) 
\cr
&= a(\hat{K} G)({\bm x}) + b (\hat{K} H)({\bm x})
\end{align}
holds for arbitrary observables $G$ and $H$ and complex numbers $a$ and $b$.
Thus, even if the dynamics ${\bm F}$ is nonlinear, the corresponding evolution of the observable is linear and the spectral methods for linear systems can be used to analyze the system's dynamical properties.

In particular, by solving the eigenvalue equation for $\hat{K}$ (assuming $\hat{K}$ has only a discrete spectrum, which is the case for ECA),
\begin{align}
(\hat{K} \Phi) ({\bm x}) = \lambda \Phi({\bm x}),
\label{koopmaneigen}
\end{align}
we can obtain a set of eigenvalues and associated eigenfunctions, $\{ \lambda_\alpha,\ \Phi^{(\alpha)} \}$, where $\alpha = 1, ..., 2^N$ is the index.
As we explain later, the system may not possess $2^N$ independent eigenfunctions. If so, we also consider the generalized eigenfunctions with higher ranks, which satisfy $(\hat{K} - \lambda \hat{I})^{m+1} {\Phi}({\bm x}) = 0$ and $(\hat{K} - \lambda \hat{I})^{m} {\Phi}({\bm x}) \neq 0$, where $m+1 \geq 2$ is the rank and $\hat{I}$ is an identity operator.

The Koopman eigenfunctions play essentially important roles in the Koopman spectral analysis of nonlinear dynamical systems and can represent physical information of the system such as conserved quantities~\cite{mezic2005spectral,budivsic2012applied,kutz2016dynamic,bollt2018matching,mauroy2020koopman,nakao2020spectral}.
For example, the Koopman eigenfunction $\Phi({\bm x})$ with the eigenvalue $\lambda = 1$ yields a {\it conserved quantity} (invariant) of the system because
\begin{align}
(\hat{K} \Phi)({\bm x}) = \Phi({\bm F}({\bm x})) = \Phi({\bm x}).
\end{align}
Thus, we have $\Phi({\bm x}_n) = (\hat{K}^n \Phi)({\bm x}_0) = \Phi({\bm x}_0) = const.$ for all $n \geq 0$.
We note that we consider general conserved quantities here, not the additive conserved quantities discussed in the context of the statistical mechanical approach to ECA~\cite{takesue1987reversible,hattori1991additive}.

\subsection{Matrix representation of the Koopman operator}

Following Budi\v{s}i\'{c}, Mohr, and Mezi\'{c}'s argument for cyclic groups~\cite{budivsic2012applied}, we introduce $2^N$ indicator functions, $b_q: M \to \{0, 1\}$ for $q=1, ..., 2^N$, where
\begin{align}
b_q( {\bm x} ) =
\left\{\begin{aligned}
&1 \quad \mbox{if ${\bm x} = {\bm x}^{(q)}$},
\cr
&0 \quad \mbox{otherwise}.
\end{aligned}\right.
\end{align}
This gives a {\it 'one-hot' representation} of the system state ${\bm x}$; specifically, if  ${\bm x}$ takes the $q$th state ${\bm x}^{(q)}$,
only a single indicator function $b_q({\bm x})$ among $\{ b_1({\bm x}), ..., b_{2^N}({\bm x}) \}$ takes the value $1$ and all the other indicator functions $b_r({\bm x})$ with $r \neq q$ are $0$.

The evolution of these indicator functions, which are also observables of the system, is expressed as
\begin{align}
(\hat{K} b_q) ( {\bm x} ) = b_q( {\bm F}( {\bm x} ) ) = \sum_{r=1}^{2^N} A_{qr} b_r(  {\bm x} )
\end{align}
for $q=1, ..., 2^N$. 
Here, $\hat{K}$ is the Koopman operator and an adjacency matrix $A \in \{0,1\}^{2^N \times 2^N} $ of the system states is introduced, whose components are given by
\begin{eqnarray}
A_{qr}=\left\{\begin{aligned}
1 &\quad \mbox{if\ the system evolves from state $r$ to $q$},\\
0 &\quad \mbox{otherwise}
\end{aligned}\right.
\label{adjacency}
\end{eqnarray}
for $q, r = 1, ..., 2^N$.
This adjacency matrix $A$ represents the state-transition network of the system as illustrated in Fig.~\ref{fig2}.
For example, when the system state ${\bm x}_n$ is in the $p$th state  at time $n$, i.e., ${\bm x}_n = {\bm x}^{(p)}$, we have $b_p({\bm x}_n) = 1$ and $b_r({\bm x}_n) = 0$ for $r \neq p$. If this state evolves into the $q$th state at time $n+1$, i.e., if ${\bm x}_{n+1} = {\bm F}({\bm x}_n={\bm x}^{(p)}) = {\bm x}^{(q)}$ and $A_{q p} = 1$, we have $b_q({\bm x}_{n+1}) = 1$ and $b_r({\bm x}_{n+1}) = 0$ for $r \neq q$.
It is noted that only a single component of each column of $A$ takes $1$ and all other components are $0$ because ECA are deterministic.

For a reversible system, we can obtain the inverse dynamics by reversing the directions of all links in the state-transition network, which can be done because the map ${\bm F}$ is bijective. Therefore, the inverse of $A$ is given by its transpose, i.e., $A^{-1} = A^{\sf T}$, where $\sf{T}$ represents matrix transposition, ;namely,
\begin{align}
(A A^{\sf T})_{qs} = \sum_{r=1}^{2^N} A_{qr} A^{\sf T}_{rs} = \sum_{r=1}^{2^N} A_{qr} A^{-1}_{rs} = \delta_{qs}.
\label{Aunitary}
\end{align}
Since $A$ is real, $A^{\sf T}_{rs} = A^*_{rs}$ holds, where $*$ indicates conjugate transpose. Therefore, $A$ is a unitary matrix.

The indicator functions form a basis of the space of observables~\cite{budivsic2012applied}.
A general observable $G({\bm x})$ can be expressed by using the indicator functions as
\begin{align}
G({\bm x}) = \sum_{q=1}^{2^N} g_q b_q({\bm x}),
\label{expansionobservable}
\end{align}
where the coefficient $g_q \in {\mathbb C}$ represents the measurement outcome when ${\bm x} = {\bm x}^{(q)}$ ($q=1, ..., 2^N$).
The evolution of this observable $G$ is given by
\begin{align}
(\hat{K} G)({\bm x}) 
&= \sum_{q=1}^{2^N} g_q (\hat{K} b_q)({\bm x})
= \sum_{q=1}^{2^N} g_q \left( \sum_{r=1}^{2^N} A_{qr} b_r({\bm x}) \right)
\cr
&=
\sum_{q=1}^{2^N} \left( \sum_{r=1}^{2^N} A^{\sf T}_{qr} g_r \right) b_q({\bm x}).
\label{koopmanmatrix0}
\end{align}
Therefore, the matrix representation of the evolution of the observable $G$ is given by
\begin{align}
{K} \begin{pmatrix} g_1 \\ \vdots \\ g_{2^N} \end{pmatrix} 
= 
\begin{pmatrix}
A_{1, 1} &  \cdots & A_{2^N, 1} \\
\vdots &             & \vdots \\
A_{1, 2^N} & \cdots & A_{2^N, 2^N}
\end{pmatrix}
\begin{pmatrix} g_1 \\ \vdots \\ g_{2^N} \end{pmatrix},
\label{koopmanmatrix}
\end{align}
where ${K}$ represents the evolution of the vector of coefficients ${\bm g} = (g_1, ..., g_{2^N})^{\sf T}$ of the observable $G$.
Thus, the transposed adjacency matrix $A^{\sf T}$ gives a matrix representation $K$ of the Koopman operator $\hat{K}$, which we call the {\it Koopman matrix} in what follows.

By expressing the Koopman eigenfunction $\Phi({\bm x})$ associated with the eigenvalue $\lambda$ as
\begin{align}
\Phi({\bm x}) =  \sum_{q=1}^{2^N} \phi_q b_q({\bm x})
\label{koopmaneigenfunction}
\end{align}
and plugging into the eigenvalue equation~(\ref{koopmaneigen}), we obtain the eigenvalue equation for the matrix ${K}$,
\begin{align}
{K} \begin{pmatrix} \phi_1 \\ \vdots \\ \phi_{2^N} \end{pmatrix} = \lambda  \begin{pmatrix} \phi_1 \\ \vdots \\ \phi_{2^N} \end{pmatrix}.
\label{koopmaneigenequation}
\end{align}
By solving the above equation, we can obtain the set of Koopman eigenvalues and {\it Koopman eigenvectors} $\{ \lambda_\alpha,\ {\bm \phi}^{(\alpha)} = (\phi_1^{(\alpha)}, ..., \phi_{2^N}^{(\alpha)})^{\sf T} \}$ and construct the Koopman eigenfunction $\Phi^{(\alpha)}$ from the obtained Koopman eigenvector ${\bm \phi}^{(\alpha)}$.
When the system is reversible, the Koopman matrix $K = A^{\sf T}$ is unitary because $A$ is unitary as shown in Eq.~(\ref{Aunitary}); thus, all eigenvalues of $K$ are on the unit circle in the complex plane.

We note that the matrix $K$ may not be diagonalizable and $2^N$ independent eigenvectors satisfying Eq.~(\ref{koopmaneigenequation}) may not be obtained.
In such cases, to obtain a basis set of $2^N$ independent vectors, we also need to consider the generalized eigenvectors with rank $m+1 \geq 2$, satisfying $(K - \lambda_\alpha I)^{m+1} {\bm \phi}^{(\alpha, m+1)} = 0$ and $(K - \lambda_\alpha I)^{m} {\bm \phi}^{(\alpha, m)} \neq 0$
when the eigenvalue $\lambda_\alpha$ is defective, where $I$ is an identity matrix.
We can then construct the generalized Koopman eigenfunction of $\hat{K}$ in the form $\Phi^{(\alpha, m+1)}({\bm x}) = \sum_{q=1}^{2^N} \phi^{(\alpha, m+1)}_q b_q({\bm x})$, satisfying $( \hat{K} - \lambda \hat{I} )^{m+1} \Phi^{(\alpha, m+1)}({\bm x}) = 0$ as mentioned previously. 

Thus, we have derived a matrix representation of the Koopman operator and observables.
We note here that the state space of the system is lifted from the $N$-dimensional space of binary vectors to a $2^N$-dimensional space of integer indices by the introduction of the one-hot representation.
This gives a linear representation of the system dynamics characterized by the adjacency matrix $A$ in Eq.~(\ref{adjacency}) and the linear evolution of the observable characterized by the Koopman matrix $K$ in Eq.~(\ref{koopmanmatrix}).

\begin{figure}[h!]
    \centering
    \includegraphics[width=1\hsize]{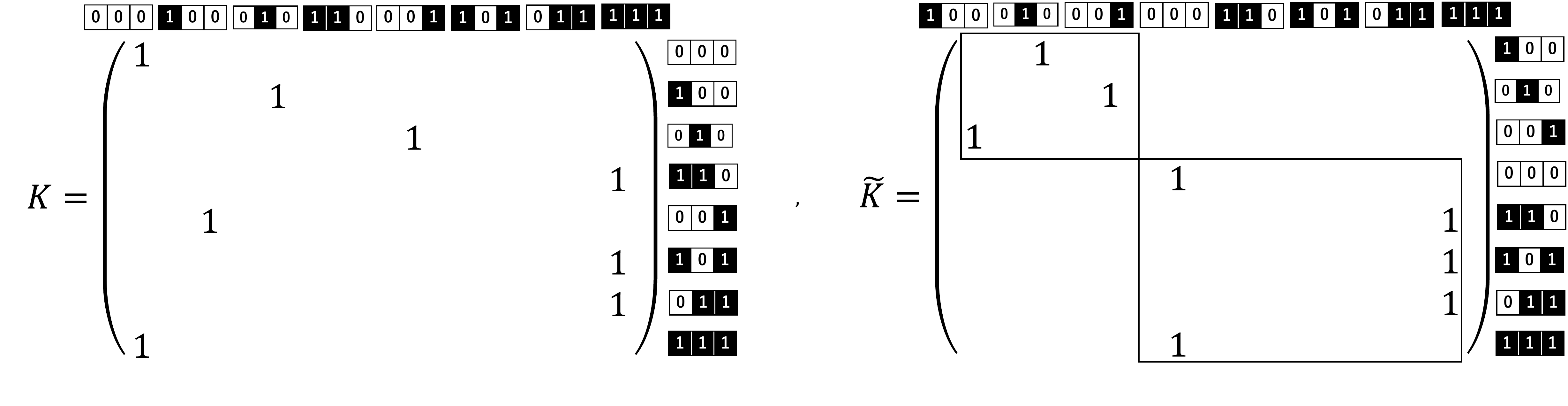}
	\caption{Koopman matrix $K$ and block-diagonalized $\tilde{K}$ for rule 120.}
    \label{fig3}
\end{figure}

\subsection{Example: rule 120 with 3 cells}

As a simple example, we analyze rule 120 of ECA with 3 cells and periodic boundary conditions shown in Fig.~\ref{fig2}.
The matrix $K$ in Fig.~\ref{fig3} shows the Koopman matrix ${K} = A^{\sf T}$ of rule 120 obtained from the state-transition network in Fig.~\ref{fig2}, where the matrix components with $0$ are left blank and only those with $1$ are shown.
We can block-diagonalize $K$ as
\begin{align}
\tilde{K} = \begin{pmatrix} K_A & 0 \\ 0 & K_B \end{pmatrix}
\end{align}
by reordering the rows and columns as shown in Fig.~\ref{fig3}, where $K_A \in \{0, 1\}^{3 \times 3}$ and $K_B \in \{0, 1\}^{5 \times 5}$. This is because the state-transition network can be divided into two- independent connected components, i.e., subnetwork $A$ consisting of states $\{2, 3, 5\}$ and subnetwork $B$ consisting of $\{1, 4, 6, 7, 8\}$, as shown in Fig.~\ref{fig2}. As the time evolutions of the states belonging to different subnetworks are independent, we can analyze each subnetwork separately.

\subsubsection{Subnetwork $A$}

Let us consider the subnetwork $A$ in Fig~\ref{fig2}.
As shown in Fig~\ref{fig4}(a), the system has three Koopman eigenvalues; i.e., $\lambda_{1} = 1$, $\lambda_{2} = e^{2 \pi i / 3}$, and $\lambda_{3}=e^{4 \pi i / 3}$.
These three eigenvalues on the unit circle in the complex plane correspond to the period-3 orbit, ${\bm x} =  (1, 0, 0) (q=2) \to (0, 1, 0) (3) \to (0, 0, 1) (5)$ in Fig.~\ref{fig2}.
\begin{figure}[h!]
    \centering
    \includegraphics[width=\hsize]{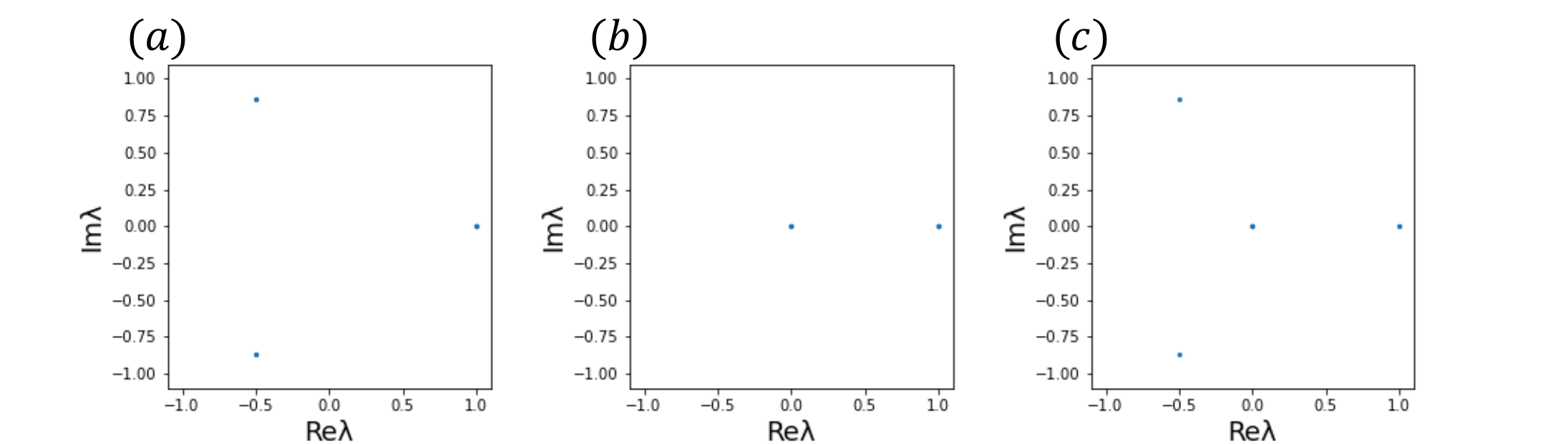}
    \caption{Koopman eigenvalues of rule 120 with 3 cells and periodic boundary conditions. (a) Eigenvalues from subnetwork A. (b) Eigenvalues from subnetwork B. (c) All eigenvalues of the whole network.}
    \label{fig4}
\end{figure}

In this case, the Koopman submatrix $K_A$ corresponding to subnetwork $A$ is diagonalizable. We find a Koopman eigenvector of $K_A$ associated with $\lambda_{1} = 1$, 
\begin{align}
{\bm \phi_A}^{(1)}=(\phi^{(1)}_2, \phi^{(1)}_3, \phi^{(1)}_5)^{\sf T}
=
(1, 1, 1)^{\sf T},
\end{align}
and also two other eigenvectors associated with $\lambda_{2} = e^{2 \pi i / 3}$ and $\lambda_{3} = e^{4 \pi i / 3}$,
\begin{align}
{\bm \phi_A}^{(2)}=(\phi^{(2)}_2, \phi^{(2)}_3, \phi^{(2)}_5)^{\sf T}
&=
(e^{\frac{2 i \pi }{3}}, e^{\frac{4 i \pi }{3}}, 1)^{\sf T},
\;\;\;
{\bm \phi}_A^{(3)}=(\phi^{(3)}_2, \phi^{(3)}_3, \phi^{(3)}_5)^{\sf T}
=
(e^{\frac{4 i \pi }{3}}, e^{\frac{2 i \pi }{3}}, 1)^{\sf T}.
\end{align}
The set of these three vectors $\{ {\bm \phi}_A^{(1)}, {\bm \phi}_A^{(2)}, {\bm \phi}_A^{(3)} \}$ corresponds to the period-3 orbit in Fig.~\ref{fig2}.
As shown in Fig.~\ref{fig2}, this subnetwork does not have any garden-of-Eden states.

\subsubsection{Subnetwork $B$}

For the subnetwork $B$ in Fig.~\ref{fig2}, we cannot diagonalize the corresponding Koopman submatrix $K_B$.
As shown in Fig~\ref{fig4}(b), we obtain five Koopman eigenvalues, $\lambda_{4} = 1$, $\lambda_{5,6,7,8} = 0$ (algebraic multiplicity $4$), and the four associated rank-$1$ eigenvectors of $K_B$,
\begin{align}
{\bm \phi_B}^{(4)}& =(\phi^{(4)}_1, \phi^{(4)}_4, \phi^{(4)}_6, \phi^{(4)}_7, \phi^{(4)}_8)^{\sf T} =(1, 1, 1, 1, 1)^{\sf T},
\cr
{\bm \phi_B}^{(5)}& =(\phi^{(5)}_1, \phi^{(5)}_4, \phi^{(5)}_6, \phi^{(5)}_7, \phi^{(5)}_8)^{\sf T}  = (0, 1, 0, 0, 0)^{\sf T},
\cr
{\bm \phi_B}^{(6)}& =(\phi^{(6)}_1, \phi^{(6)}_4, \phi^{(6)}_6, \phi^{(6)}_7, \phi^{(6)}_8)^{\sf T}  = (0, 0, 1, 0, 0)^{\sf T},
\cr
{\bm \phi_B}^{(7)}& =(\phi^{(7)}_1, \phi^{(7)}_4, \phi^{(7)}_6, \phi^{(7)}_7, \phi^{(7)}_8)^{\sf T}  = (0, 0, 0, 1, 0)^{\sf T},
\end{align}
and, additionally, one rank-$2$ generalized eigenvector
\begin{align}
 {\bm \phi_B}^{(8)} =(\phi^{(8)}_1, \phi^{(8)}_4, \phi^{(8)}_6, \phi^{(8)}_7, \phi^{(8)}_8)^{\sf T}  = (0, 0, 0, 0, 1)^{\sf T}.   
\end{align}
The rank-$1$ eigenvector ${\bm \phi}_B^{(4)}$ with the eigenvalue $\lambda_{4} = 1$ corresponds to the stationary state ${\bm x} = (0, 0, 0)$ ($q=1$) and all other states in its basin, i.e., ${\bm x} = (1, 1, 0)$, $(1, 0, 1)$, $(0, 1, 1)$, and $(1, 1, 1)$ ($q=4, 6, 7, 8$), and its components uniformly take a value $1$ at all $q \in \{1, 4, 6, 7, 8\}$.
The three rank-$1$ eigenvectors $\{{\bm \phi}_B^{(5)}, {\bm \phi}_B^{(6)}, {\bm \phi}_B^{(7)} \}$ associated with eigenvalues $\lambda_{5,6,7} = 0$ correspond to the three garden-of-Eden states, i.e., ${\bm x} = (1, 1, 0),\ (1, 0, 1),\ (0, 1, 1)$ ($q=4, 6, 7$), respectively, as can be seen from the locations of the component $1$.
The remaining rank-$2$ generalized eigenvector ${\bm \phi}^{(8)}_B$ corresponds to the state  ${\bm x} = (1, 1, 1)$ in the tree subnetwork, which connects the garden-of-Eden states and the stationary state.

\subsubsection{The whole network}

From the analysis for the subnetworks A and B, the original Koopman matrix $K$ has eight eigenvalues, $\lambda_{1,4} = 1$ (multiplicity $2$), $\lambda_{2} = e^{2 \pi i / 3}$, $\lambda_{3} = e^{4 \pi i / 3}$,  $\lambda_{5,6,7,8} = 0$ (multiplicity $4$), and associated seven rank-$1$ eigenvectors
\begin{align}
{\bm \phi}^{(1)} &=(0, \phi_{A,1}^{(1)}, \phi_{A,2}^{(1)}, 0, \phi_{A,3}^{(1)}, 0, 0, 0)^{\sf T} 
=
(0, 1, 1, 0, 1, 0, 0, 0)^{\sf T},\cr
{\bm \phi}^{(2)} &=(0, \phi_{A,1}^{(2)}, \phi_{A,2}^{(2)}, 0, \phi_{A,3}^{(2)}, 0, 0, 0)^{\sf T} 
=
(0, e^{\frac{2 i \pi }{3}}, e^{\frac{4 i \pi }{3}}, 0, 1, 0, 0, 0)^{\sf T},\cr
{\bm \phi}^{(3)} &=(0, \phi_{A,1}^{(3)}, \phi_{A,2}^{(3)}, 0, \phi_{A,3}^{(3)}, 0, 0, 0)^{\sf T} 
=
(0, e^{\frac{4 i \pi }{3}}, e^{\frac{2 i \pi }{3}}, 0, 1, 0, 0,  0)^{\sf T}.\cr
{\bm \phi}^{(4)} &= (\phi_{B,1}^{(4)}, 0, 0, \phi_{B,2}^{(4)}, 0, \phi_{B,3}^{(4)}, \phi_{B,4}^{(4)}, \phi_{B,5}^{(4)})^{\sf T}
= (1, 0, 0, 1, 0, 1, 1, 1)^{\sf T},\cr
{\bm \phi}^{(5)} &=(\phi_{B,1}^{(5)}, 0, 0, \phi_{B,2}^{(5)}, 0, \phi_{B,3}^{(5)}, \phi_{B,4}^{(5)}, \phi_{B,5}^{(5)})^{\sf T}
= (0, 0, 0, 1, 0, 0, 0, 0)^{\sf T},\cr
{\bm \phi}^{(6)} &=(\phi_{B,1}^{(6)}, 0, 0, \phi_{B,2}^{(6)}, 0, \phi_{B,3}^{(6)}, \phi_{B,4}^{(6)}, \phi_{B,5}^{(6)})^{\sf T}
= (0, 0, 0, 0, 0, 1, 0, 0)^{\sf T},\cr
{\bm \phi}^{(7)} &=(\phi_{B,1}^{(7)}, 0, 0, \phi_{B,2}^{(7)}, 0, \phi_{B,3}^{(7)}, \phi_{B,4}^{(7)}, \phi_{B,5}^{(7)})^{\sf T}
= (0, 0, 0, 0, 0, 0, 1, 0)^{\sf T},
\end{align}
 and one rank-$2$ eigenvector
\begin{align}
 {\bm \phi}^{(8)} =(\phi_{B,1}^{(8)}, 0, 0, \phi_{B,2}^{(8)}, 0, \phi_{B,3}^{(8)}, \phi_{B,4}^{(8)}, \phi_{B,5}^{(8)})^{\sf T}
= (0, 0, 0, 0, 0, 0, 0, 1)^{\sf T}.   
\end{align}
From the two Koopman eigenvectors with $\lambda_{1}=1$ and $\lambda_{4}=1$, we can construct the corresponding Koopman eigenfunctions, namely, two conserved quantities of the system.
As can be seen from Fig.~\ref{fig2}, each of these Koopman eigenvectors with the eigenvalue $1$ represents a connected component of the state-transition network.

We can summarize the above finding as follows. (i) Each eigenvector associated with the eigenvalue $1$ represents a connected component of the state-transition network; therefore, the multiplicity of the eigenvalue $1$ gives the number of connected components in the network. (ii) The eigenvalues reflect the period of the orbit embedded in each connected component.
(iii) The multiplicity of the zero eigenvalue gives the number of states that are not included in the periodic or stationary state.
(iv) If the system has no zero eigenvalue, the system has no garden-of-Eden state and is reversible.
In Sec. IV,  we explicitly construct these representative Koopman eigenvectors and rationalize the above observations.

\section{EXPLIXIT FORMS OF KOOPMAN EIGENFUNCTIONS}

\subsection{Eigenvalues and eigenvectors of the Koopman matrix}

In Sec.~III C, we have seen the Koopman eigenvalues and eigenvectors of rule 120 as an example.
As explained in Sec.~III B, for reversible ECA, the Koopman matrix is unitary and all Koopman eigenvalues are on the unit circle. 
For non-reversible ECA, Koopman eigenvalues $0$ also arise inside the unit circle as we saw for rule 120.

In general, the Koopman eigenvalues of finite ECA are either on the unit circle (including $1$) or $0$. This can be explained as follows.
The observable $\Phi$ associated with the eigenvalue $\lambda$ obeys $(K^n \Phi)({\bm x}) = \Phi({\bm F}^n({\bm x})) = \lambda^n \Phi({\bm x})$.
Therefore, if an eigenvalue $\lambda$ other than $0$ exists inside the unit circle, i.e., $0 < | \lambda | < 1$, the associated observable takes infinitely long time steps to converge to the final periodic or stationary values.
On the other hand, if an eigenvalue $\lambda$ with $| \lambda | > 1$ exists outside the unit circle, the associated observable $\Phi$ diverges.
However, both of these are not allowed in finite ECA.

In this section, we explicitly construct the Koopman eigenfunctions associated with these eigenvalues 
and discuss their relationship with reversibility and the number of conserved quantities of the system.

\subsection{Eigenvalue $0$}

For the indicator functions corresponding to the garden-of-Eden states, we have the following lemma.

\begin{lemma}
Assume that the system has a garden-of-Eden state ${\bm x}^{(u)} \in M$ with index $u \in \{1, ..., 2^N\}$. Then, the indicator function $b_{u}({\bm x})$
is a Koopman eigenfunction with the eigenvalue $0$; i.e., 
\begin{align}
(\hat{K} b_{u}) ({\bm x}) = b_{u}({\bm F}({\bm x})) = 0.
\end{align}
\end{lemma}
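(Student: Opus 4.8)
The statement is essentially immediate from the definition of the Koopman operator together with the defining property of a garden-of-Eden state, so my plan is simply to chase through the definitions carefully. Recall that ${\bm x}^{(u)}$ being a garden-of-Eden state means, by the definition given in Sec.~II~B, that ${\bm x}^{(u)}$ is not reachable from any state, i.e., there is no ${\bm y} \in M$ with ${\bm F}({\bm y}) = {\bm x}^{(u)}$. Equivalently, ${\bm x}^{(u)}$ is not in the image of ${\bm F}$.

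First I would fix an arbitrary ${\bm x} \in M$ and evaluate $(\hat{K} b_u)({\bm x})$ using the definition $(\hat{K} b_u)({\bm x}) = b_u({\bm F}({\bm x}))$. Next I would observe that ${\bm F}({\bm x})$ is, by construction, a point in the image of ${\bm F}$, whereas ${\bm x}^{(u)}$ is not in the image of ${\bm F}$; hence ${\bm F}({\bm x}) \neq {\bm x}^{(u)}$. Then, by the definition of the indicator function $b_u$, we get $b_u({\bm F}({\bm x})) = 0$. Since ${\bm x}$ was arbitrary, $\hat{K} b_u$ is the zero function, so $\hat{K} b_u = 0 \cdot b_u$, which means $b_u$ is a Koopman eigenfunction with eigenvalue $0$. (One should note $b_u$ is nonzero as a function — it takes the value $1$ at ${\bm x}^{(u)}$ — so it is a genuine eigenfunction.)

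There is essentially no obstacle here; the only subtlety worth a sentence is to confirm that a ``garden-of-Eden state'' as defined in the text (a source node without a self-loop in the state-transition network) really does coincide with ``not in the image of ${\bm F}$''. A source node with no incoming links has no preimage under ${\bm F}$ by definition of the adjacency matrix $A$ in Eq.~(\ref{adjacency}), and the absence of a self-loop just rules out the degenerate case where the state is its own unique preimage; so the equivalence holds. Once that is in place the proof is a one-line definition chase, and in matrix language it is simply the statement that the $u$th column of $A$ is zero, equivalently the $u$th row of $K = A^{\sf T}$ is zero, so $K {\bm e}_u = 0$ where ${\bm e}_u$ is the corresponding one-hot vector.
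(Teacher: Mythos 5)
Your proof is correct and follows essentially the same route as the paper's: a garden-of-Eden state has no preimage under ${\bm F}$, so ${\bm F}({\bm x}) \neq {\bm x}^{(u)}$ for every ${\bm x} \in M$ and hence $b_u({\bm F}({\bm x})) = 0$ identically. The extra remarks you add (that $b_u$ is a nonzero function, and the matrix-language restatement $K{\bm e}_u = 0$) are consistent with what the paper states immediately after its proof.
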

\begin{proof}
Because ${\bm x}^{(u)}$ is a garden-of-Eden state, no state ${\bm x}$ evolves into ${\bm x}^{(u)}$, i.e., ${\bm F}({\bm x}) \neq {\bm x}^{(u)}$ for all ${\bm x} \in M$; hence, $b_u({\bm F}({\bm x})) = 0$  for all ${\bm x} \in M$.
\end{proof}

The corresponding zero eigenvector of $K$ takes the form ${\bm \phi} = (0, ..., 0, 1, 0, ..., 0)^{\sf T}$,
where only the $u$th component takes $1$.
This can also be seen from the state-transition network $A$; because no link directed to $u$ exists, all components of the $u$th row of $A$ and $u$th column of $K = A^{\sf T}$ are zero; hence, the above ${\bm \phi}$ yields a zero vector when we multiply $K$, i.e., $K {\bm \phi} = 0$.

If the system has $k$ different garden-of-Eden states, the system has at least $k$ zero Koopman eigenvalues.
Therefore, if the system has no zero eigenvalue, the system has no garden-of-Eden state, i.e., the system is reversible.\\

The above lemma can be generalized to other indicator functions as follows. 

\begin{lemma}
Assume that the system has a state ${\bm x}^{(q)} \in M$ with index $q \in \{1, ..., 2^N\}$ that is not included in any of the periodic orbits (including stationary states). Then, the indicator function $b_q({\bm x})$ is a generalized Koopman eigenfunction of rank $m+1$ with the eigenvalue $0$, i.e.,
\begin{align}
( \hat{K}^{m+1} b_q) ({\bm x}) = b_q({\bm F}^{m+1}({\bm x})) = 0,
\end{align}
where $m \geq 0$ is the distance to the most distant garden-of-Eden state from ${\bm x}^{(q)}$ in the same tree subnetwork as ${\bm x}^{(q)}$. When ${\bm x}^{(q)}$ is a garden-of-Eden state, $m=0$.
\end{lemma}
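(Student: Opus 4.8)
The plan is to compute the action of powers of $\hat{K}$ on $b_q$ straight from the definition and then read off the rank from the tree structure of the state-transition network recalled in Sec.~II~B. Since $(\hat{K}^j b_q)(\bm{x}) = b_q(\bm{F}^j(\bm{x}))$, this function equals $1$ precisely when $\bm{F}^j(\bm{x}) = \bm{x}^{(q)}$ and $0$ otherwise; hence $\hat{K}^j b_q$ is the sum of the indicator functions of the $j$-step preimages of $\bm{x}^{(q)}$ when such preimages exist, and is identically $0$ when $\bm{x}^{(q)}$ has no preimage under $\bm{F}^j$. So the lemma reduces to the purely combinatorial claim that $\bm{x}^{(q)}$ admits a $j$-step preimage exactly for $0 \le j \le m$.

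First I would place $\bm{x}^{(q)}$ in the network: being excluded from every periodic orbit, it sits in a tree subnetwork whose root is a state of the periodic orbit of its connected component and whose leaves are garden-of-Eden states (Sec.~II~B). Because $\bm{F}$ is deterministic, each node has out-degree $1$, so forward paths are unique; in particular, for any ancestor $\bm{y}$ of $\bm{x}^{(q)}$ there is a unique $j$ with $\bm{F}^j(\bm{y}) = \bm{x}^{(q)}$, which I call its distance to $\bm{x}^{(q)}$ (a second such $j$ would place $\bm{x}^{(q)}$ on a cycle, contradicting non-periodicity). Conversely, a maximal chain of predecessors starting from $\bm{x}^{(q)}$ must terminate at a garden-of-Eden state, since the state space is finite and no ancestor of $\bm{x}^{(q)}$ can be periodic; by definition $m$ is the length of the longest such chain.

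Next I would prove the two halves. For $0 \le j \le m$: stepping $j$ times down a longest predecessor chain from $\bm{x}^{(q)}$ gives a state $\bm{y}$ with $\bm{F}^j(\bm{y}) = \bm{x}^{(q)}$, so $(\hat{K}^j b_q)(\bm{y}) = 1$ and $\hat{K}^j b_q \not\equiv 0$; in particular $\hat{K}^m b_q \neq 0$. For $j = m+1$: suppose $\bm{F}^{m+1}(\bm{y}) = \bm{x}^{(q)}$ for some $\bm{y}$. None of $\bm{y}, \bm{F}(\bm{y}), \ldots, \bm{F}^m(\bm{y})$ can be periodic, else $\bm{x}^{(q)}$ would be periodic; and since the forward orbit of $\bm{y}$ reaches the off-cycle state $\bm{x}^{(q)}$ before it can touch the cycle, this whole chain lies in the same tree subnetwork as $\bm{x}^{(q)}$, making $\bm{y}$ an ancestor of $\bm{x}^{(q)}$ at distance $m+1$. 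Extending $\bm{y}$ upstream to a garden-of-Eden leaf $\bm{x}^{(u')}$ (at some distance $\ell \ge 0$) would give a garden-of-Eden state at distance $\ell + m + 1 > m$ from $\bm{x}^{(q)}$, contradicting the maximality of $m$. Hence $\bm{x}^{(q)}$ has no $(m+1)$-step preimage and $\hat{K}^{m+1} b_q = 0$. Combining, $(\hat{K} - 0\cdot\hat{I})^{m+1} b_q = 0$ while $(\hat{K} - 0\cdot\hat{I})^{m} b_q \neq 0$, so $b_q$ is a generalized Koopman eigenfunction of rank $m+1$ with eigenvalue $0$; the case $m = 0$ recovers the preceding lemma.

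The main obstacle is the $j = m+1$ step: it is precisely there that one must exclude an $(m+1)$-step preimage arriving "from the cycle" or from another branch or tree, and this uses the out-degree-one determinism together with the structural facts established in Sec.~II~B (each connected component contains a single periodic orbit with trees attached at its nodes). The rest is bookkeeping on finite directed paths.
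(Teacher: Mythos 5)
Your proof is correct and follows essentially the same route as the paper: place ${\bm x}^{(q)}$ in its tree subnetwork (root on the periodic orbit, garden-of-Eden leaves) and argue that an $(m+1)$-step preimage would yield a garden-of-Eden ancestor at distance greater than $m$, contradicting the maximality of $m$. You additionally verify the rank lower bound $(\hat{K}^{m} b_q) \neq 0$ by exhibiting an $m$-step preimage along a longest predecessor chain, a step the paper's two-sentence proof leaves implicit.
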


\begin{proof}
The state ${\bm x}^{(q)}$ belongs to a tree subnetwork whose root state is included in the periodic orbit and whose leaves are garden-of-Eden states (Sec.II B).
Therefore, if the distance from ${\bm x}^{(q)}$ to the most distant garden-of-Eden state in the same tree subnetwork is $m$, we have ${\bm F}^{m+1}({\bm x}) \neq {\bm x}^{(q)}$ for all ${\bm x} \in M$; hence, $b_q({\bm F}^{m+1}({\bm x})) = 0$ for all ${\bm x} \in M$.
\end{proof}

The corresponding vector, ${\bm \phi} = (0, ..., 0, 1, 0, ..., 0)^{\sf T}$, where only the $q$th component takes $1$, is a generalized zero eigenvector of rank $m+1$ with the eigenvalue $0$ of $K$; i.e., $K^{m+1} {\bm \phi} = 0$.
The maximal rank of the generalized eigenfunction gives the maximum number of time steps required for the system to converge to the periodic orbit in a given connected component.

\subsection{Eigenvalue $1$}

For the indicator functions corresponding to isolated stationary states, we have the following lemma.

\begin{lemma}
Assume that a state ${\bm x}^{(s)} \in M$ with index $s \in \{1, ..., 2^N\}$ is an isolated stationary state of the system, i.e., ${\bm F}({\bm x}^{(s)}) = {\bm x}^{(s)} $ and ${\bm F}({\bm x}) \neq {\bm x}^{(s)}$ for all ${\bm x} \neq {\bm x}^{(s)}$.
Then, the indicator function $b_{s}({\bm x})$ is a Koopman eigenfunction associated with the eigenvalue $1$; i.e.,
\begin{align}
(\hat{K} b_{s}) ({\bm x}) =  b_{s}({\bm F}({\bm x})) = b_{s}({\bm x}).
\end{align}
\end{lemma}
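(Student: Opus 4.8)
The plan is to verify the Koopman eigenvalue equation~(\ref{koopmaneigen}) for $b_s$ with $\lambda = 1$ directly from the definitions, i.e., to show that $b_s({\bm F}({\bm x})) = b_s({\bm x})$ holds for every ${\bm x} \in M$. Both sides take values in $\{0,1\}$, so the natural approach is a two-case argument on whether ${\bm x}$ equals ${\bm x}^{(s)}$ or not, closely paralleling the proof of the garden-of-Eden lemma (Lemma~1).

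In the first case, ${\bm x} = {\bm x}^{(s)}$, the right-hand side equals $b_s({\bm x}^{(s)}) = 1$, while the fixed-point hypothesis ${\bm F}({\bm x}^{(s)}) = {\bm x}^{(s)}$ gives $b_s({\bm F}({\bm x}^{(s)})) = b_s({\bm x}^{(s)}) = 1$; the two sides agree. In the second case, ${\bm x} \neq {\bm x}^{(s)}$, the right-hand side is $0$, and the isolation hypothesis — no state other than ${\bm x}^{(s)}$ maps to ${\bm x}^{(s)}$, so ${\bm F}({\bm x}) \neq {\bm x}^{(s)}$ — gives $b_s({\bm F}({\bm x})) = 0$; again the two sides agree. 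Hence $b_s \circ {\bm F} = b_s$ on all of $M$, which is exactly $(\hat{K} b_s)({\bm x}) = b_s({\bm x})$, establishing that $b_s$ is a Koopman eigenfunction with eigenvalue $1$.

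I expect no real obstacle: the statement is an immediate unpacking of the definitions of $\hat{K}$, of $b_s$, and of ``isolated stationary state.'' The only point to be mindful of is that both halves of the isolation hypothesis are genuinely needed — the self-loop property ${\bm F}({\bm x}^{(s)}) = {\bm x}^{(s)}$ for the first case and the absence of any other incoming transition for the second — since dropping either would break the corresponding case (indeed, without the self-loop, ${\bm x}^{(s)}$ would be a garden-of-Eden state and $b_s$ an eigenfunction with eigenvalue $0$ instead, by Lemma~1).

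Equivalently, one could argue at the matrix level via Eq.~(\ref{koopmaneigenequation}): under these hypotheses the $s$th column of the Koopman matrix $K = A^{\sf T}$ is the standard basis vector with a single $1$ in position $s$ (the self-loop is the unique transition into state $s$ in the network $A$), so the vector ${\bm \phi}$ with $\phi_q = \delta_{qs}$ satisfies $K{\bm \phi} = {\bm \phi}$, yielding the same conclusion together with the explicit eigenvector.
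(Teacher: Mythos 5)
Your proof is correct and follows essentially the same two-case argument as the paper (splitting on whether ${\bm x} = {\bm x}^{(s)}$ and invoking the fixed-point and isolation hypotheses in the respective cases). The additional matrix-level remark is consistent with the paper's subsequent comment on the form of the corresponding eigenvector.
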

\begin{proof}
If ${\bm x} \neq {\bm x}^{(s)}$, we have $b_s({\bm x}) = b_s({\bm F}({\bm x})) = 0$. If ${\bm x} = {\bm x}^{(s)}$, we have $b_s({\bm x}) = b_s({\bm F}({\bm x})) = 1$. Thus, the above eigenvalue equation holds for all ${\bm x} \in M$.
\end{proof}
The corresponding Koopman eigenvector is given in the form
${\bm \phi} = (0, ..., 0, 1, 0, ..., 0)^{\sf T}$, where only the $s$th component takes the value $1$ and all other components vanish.\\

The above lemma can be generalized to connected components (subnetworks) of the state-transition network.

\begin{lemma}
Let $S \subseteq M$ be a set of states in a connected component of the state-transition network and $Q_S \subseteq \{1, ..., 2^N\}$ the corresponding set of indices. Then, the observable
\begin{align}
c({\bm x}) = \sum_{q \in Q_S} b_q( {\bm x} )
\label{uniformeigenfunction}
\end{align}
is a Koopman eigenfunction associated with the eigenvalue $1$, i.e.,
\begin{align}
(\hat{K} c)({\bm x}) = c( {\bm F}( {\bm x} ) ) = c({\bm x}).
\end{align}
\end{lemma}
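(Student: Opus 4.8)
The plan is to recognize that $c(\bm{x})$ is nothing but the indicator function of the subset $S\subseteq M$: by the one-hot property of the $b_q$, exactly one term in $\sum_{q\in Q_S} b_q(\bm{x})$ is nonzero when $\bm{x}\in S$ (so $c(\bm{x})=1$) and every term vanishes when $\bm{x}\notin S$ (so $c(\bm{x})=0$). Hence the eigenvalue equation $(\hat{K}c)(\bm{x})=c(\bm{F}(\bm{x}))=c(\bm{x})$ is equivalent to the set-theoretic statement that $\bm{F}(\bm{x})\in S$ if and only if $\bm{x}\in S$, for every $\bm{x}\in M$.

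First I would prove the forward implication: if $\bm{x}\in S$, the state-transition network contains the directed link $\bm{x}\to\bm{F}(\bm{x})$, so $\bm{F}(\bm{x})$ lies in the same weakly connected component as $\bm{x}$, namely $S$. Then the converse: if $\bm{x}\notin S$, then since the weakly connected components partition $M$ (as set up in Sec.~II\,B), $\bm{x}$ lies in some other component $S'\neq S$, and the same link argument gives $\bm{F}(\bm{x})\in S'$; disjointness of $S$ and $S'$ forces $\bm{F}(\bm{x})\notin S$. Combining the two implications yields $c(\bm{F}(\bm{x}))=c(\bm{x})$ for all $\bm{x}\in M$, i.e., $(\hat{K}c)(\bm{x})=c(\bm{x})$, which is the claim.

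An equivalent route works directly with the Koopman matrix: the candidate eigenvector is $\bm{\phi}$ with $\phi_q=1$ for $q\in Q_S$ and $\phi_q=0$ otherwise. Because the ECA is deterministic, each row of $K=A^{\sf T}$ has a single nonzero entry, so $(K\bm{\phi})_q=\phi_{r(q)}$, where $r(q)$ is the index of $\bm{F}(\bm{x}^{(q)})$; the invariance $r(q)\in Q_S\Leftrightarrow q\in Q_S$ then gives $K\bm{\phi}=\bm{\phi}$, hence eigenvalue $1$.

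I do not expect a genuine obstacle here. The only step requiring a little care is the converse implication, where one must invoke that the weakly connected components form a partition of $M$ and that forward iteration never leaves a component—both already established in Sec.~II\,B. Everything else is a direct unwinding of the one-hot representation and the definition of $\hat{K}$.
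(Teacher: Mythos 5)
Your proposal is correct and follows essentially the same route as the paper: the paper's proof is exactly the two-case argument that $\bm{x}\in S$ implies $\bm{F}(\bm{x})\in S$ (so $c=1$ on both) and $\bm{x}\notin S$ implies $\bm{F}(\bm{x})\notin S$ (so $c=0$ on both). You merely spell out the justification for the forward invariance and the disjointness of components in more detail, and add an equivalent matrix-level restatement; neither changes the substance.
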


\begin{proof}
If ${\bm x} \in S$, we have ${\bm F}({\bm x}) \in S$ and $c({\bm x}) = c({\bm F}({\bm x})) = 1$. If ${\bm x} \notin S$, we have ${\bm F}({\bm x}) \notin S$ and $c({\bm x}) = c({\bm F}({\bm x})) = 0$. Thus, the above eigenvalue equation holds for all ${\bm x} \in M$.
\end{proof}

The vector components of the corresponding eigenvector ${\bm \phi} = (\phi_1, ..., \phi_{2^N})^{\sf T}$ are given by $\phi_q = 1$ for $q \in Q_S$ and $\phi_q = 0$ for $q \notin Q_S$.
\\

\subsection{Eigenvalue on the unit circle}

We can also construct the eigenfunctions corresponding to an isolated periodic orbit.
Assume that the system has an isolated period-$T$ orbit, represented as ${\bm x}^{(P_0)} \to {\bm x}^{(P_1)} \to ... \to {\bm x}^{(P_{T-1})} \to {\bm x}^{(P_T)} = {\bm x}^{(P_0)}$, where $P_0$, ..., $P_{T-1}$, $P_T = P_0$ are the indices of the states along the periodic orbit.
We denote the set of these states as $\chi$.
The system states not included in $\chi$ can never reach $\chi$; i.e., ${\bm F}({\bm x}) \notin \chi$ when ${\bm x} \notin \chi$.
We consider an observable of the form
\begin{align}
c({\bm x}) = \sum_{q=0}^{T-1} c_q b_{P_q}({\bm x}),
\end{align}
where $c_0, ..., c_{T-1}$ are non-zero coefficients and seek the condition that this $c({\bm x})$ is a Koopman eigenfunction with the eigenvalue $\lambda$; i.e., it satisfies the eigenvalue equation 
\begin{align}
\hat{K} c({\bm x}) = \lambda c({\bm x}).
\label{eigenvalueeq1}
\end{align}

First, if ${\bm x} \notin \chi$, we have $\hat{K} c({\bm x}) = c({\bm F}({\bm x})) = 0$ and $c({\bm x}) = 0$; thus the eigenvalue equation~(\ref{eigenvalueeq1}) is satisfied irrespective of $\lambda$.
Next, for ${\bm x} = {\bm x}^{(P_j)} \in \chi$ ($j=0, ..., T-1$), we have
\begin{align}
c({\bm x}) = \sum_{q=0}^{T-1} c_q b_{P_q}({\bm x}) = c_j
\end{align}
and
\begin{align}
\hat{K} c({\bm x}) = c({\bm F}({\bm x})) = \sum_{q=0}^{T-1} c_q b_{P_q}({\bm F}({\bm x})) = c_{j+1},
\end{align}
where we used ${\bm F}({\bm x}^{(P_j)}) = {\bm x}^{(P_{j+1})}$ and that the index $j+1$ is considered in modulo $T$.
Thus, to satisfy the eigenvalue equation~(\ref{eigenvalueeq1}), we need $c_{j+1} = \lambda c_j$, and hence,
$c_{j} = \lambda^j c_0$.
Because $c_{j+T} = c_j$, the eigenvalue $\lambda$ should satisfy $\lambda^T = 1$. Assuming $c_0 = 1$ without loss of generality, we obtain the following lemma. 

\begin{lemma}
Let $b_{P_0}({\bm x}), ..., b_{P_{T-1}}({\bm x})$ denote the indicator functions corresponding to an isolated periodic orbit ${\bm x}^{(P_0)} \to ... \to {\bm x}^{(P_{T-1})} (\to {\bm x}^{(P_0)})$. Then, the observable
\begin{align}
c({\bm x}) = \sum_{q=0}^{T-1} \lambda^q b_{P_q}({\bm x})
\end{align}
is a Koopman eigenfunction with the eigenvalue $\lambda = \exp(2\pi i k / T)$ ($k=0, 1, ..., T-1$).
\end{lemma}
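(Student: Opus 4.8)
The plan is to verify the eigenvalue equation $\hat{K}c = \lambda c$ pointwise on $M$, splitting the state space into the states that lie on the periodic orbit and those that do not. Write $\chi = \{ {\bm x}^{(P_0)}, \ldots, {\bm x}^{(P_{T-1})} \}$. The structural input I would use is the forward-invariance and isolation of $\chi$, already recorded just before the statement: ${\bm F}({\bm x}) \in \chi$ for ${\bm x} \in \chi$, while ${\bm F}({\bm x}) \notin \chi$ for ${\bm x} \notin \chi$. This follows from determinism of ${\bm F}$ (a single outgoing link per state) together with the fact that the only periodic orbit reachable inside the relevant connected component is the given one.

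First I would dispose of the off-orbit case ${\bm x} \notin \chi$. There every $b_{P_q}({\bm x})$ vanishes, so $c({\bm x}) = 0$; and since ${\bm F}({\bm x}) \notin \chi$ as well, $b_{P_q}({\bm F}({\bm x})) = 0$ for all $q$, whence $(\hat{K} c)({\bm x}) = c({\bm F}({\bm x})) = 0 = \lambda\, c({\bm x})$. Thus the eigenvalue equation holds off the orbit for an arbitrary scalar $\lambda$, and the candidate eigenfunction may legitimately be taken supported entirely on $\chi$.

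Next, for ${\bm x} = {\bm x}^{(P_j)}$ with $j \in \{0, \ldots, T-1\}$, exactly one indicator in the defining sum is nonzero, giving $c({\bm x}^{(P_j)}) = \lambda^j$. Applying $\hat{K}$ and using ${\bm F}({\bm x}^{(P_j)}) = {\bm x}^{(P_{j+1})}$, with the index read modulo $T$ and $P_T = P_0$, yields $(\hat{K} c)({\bm x}^{(P_j)}) = c({\bm x}^{(P_{j+1})}) = \lambda^{j+1} = \lambda \cdot \lambda^j = \lambda\, c({\bm x}^{(P_j)})$, as desired.

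The one point requiring attention — and the place where the quantization of $\lambda$ enters — is consistency at the wrap-around $j = T-1$: the recursion would assign $c({\bm x}^{(P_T)}) = \lambda^T$, but ${\bm x}^{(P_T)} = {\bm x}^{(P_0)}$ forces $c({\bm x}^{(P_0)}) = \lambda^0 = 1$, so $\lambda^T = 1$ is needed and sufficient. Hence $\lambda = \exp(2\pi i k / T)$ for $k = 0, 1, \ldots, T-1$, and for each such root of unity the construction is well defined and satisfies the eigenvalue equation on all of $M$. There is no substantive obstacle; the only care needed is the modular bookkeeping of the orbit indices and the observation that the off-orbit case imposes no constraint on $\lambda$.
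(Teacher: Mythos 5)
Your proof is correct and follows essentially the same route as the paper: the paper derives the lemma by checking the eigenvalue equation pointwise, disposing of off-orbit states via the forward-invariance of the complement of $\chi$, obtaining the recursion $c_{j+1}=\lambda c_j$ on the orbit, and extracting $\lambda^T=1$ from the wrap-around condition. Your version merely runs the same computation as a verification of the explicit coefficients $\lambda^j$ rather than as a derivation of them, which is an immaterial difference.
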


Thus, for a period-$T$ orbit, we can construct $T$ independent eigenfunctions. In particular, when $k=0$, we obtain $\lambda = 1$ and $c_0 = c_1 = ... = c_{T-1} = 1$, which yields the Koopman eigenfunction with the eigenvalue $1$ discussed in Lemma 4 for the connected component of the state-transition network.\\

The above discussion can be generalized to a non-isolated periodic-$T$ orbit as follows. As before, we denote by  $\chi = \{ {\bm x}^{(P_0)}, {\bm x}^{(P_1)}, ... ,{\bm x}^{(P_{T-1})} \}$ the set of states included in the period-$T$ orbit. We consider a connected component $S = \{  {\bm x}^{(P_0)}, {\bm x}^{(P_1)}, ... ,{\bm x}^{(P_{T-1})}, {\bm x}^{(t_1)}, {\bm x}^{(t_2)}, ..., {\bm x}^{(t_s)} \}$ of the state-transition network in which $\chi$ is included, where $t_1, ..., t_{s}$ are the indices of the states in $S$ but not included in $\chi$.

\begin{lemma}
Let $S \subseteq M$ denote a connected component of the state-transition network including $\chi$ and $Q_S = \{ P_0, P_1, ..., P_{T-1}, t_1, t_2, ..., t_{s} \} \subseteq \{1, ..., 2^N \}$ the corresponding set of state indices, where $s$ is the number of states not included in $\chi$. Then, the observable
\begin{align}
c({\bm x}) = \sum_{q \in Q_S} \lambda^{-D_q} b_q({\bm x}),
\label{eq33}
\end{align}
where $D_q$ is the distance from ${\bm x}^{(q)}$ to ${\bm x}^{(P_0)} \in \chi$ along a directed path of the state-transition network and is a Koopman eigenfunction with the eigenvalue $\lambda = \exp(2\pi i k / T)$ ($k=0, 1, ..., T-1$).
\end{lemma}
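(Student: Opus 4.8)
The plan is to verify the eigenvalue equation $(\hat{K}c)({\bm x}) = c({\bm F}({\bm x})) = \lambda c({\bm x})$ pointwise for every ${\bm x}\in M$, splitting into the case ${\bm x}\notin S$ and the case ${\bm x}\in S$, exactly as in the proofs of Lemmas 4 and 5.

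For ${\bm x}\notin S$: since $S$ is a (weakly) connected component of the state-transition network, it is forward invariant --- if ${\bm x}\notin S$ had ${\bm F}({\bm x})\in S$, the directed link ${\bm x}\to{\bm F}({\bm x})$ would attach ${\bm x}$ to $S$, contradicting maximality. Hence $b_q({\bm x}) = b_q({\bm F}({\bm x})) = 0$ for every $q\in Q_S$, so both sides of the eigenvalue equation vanish and it holds irrespective of $\lambda$.

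For ${\bm x} = {\bm x}^{(q)}$ with $q\in Q_S$: first I would argue that $\lambda^{-D_q}$ is well defined. By the tree-plus-cycle structure of $S$ recalled in Sec.~II B, the forward orbit ${\bm x}^{(q)}\to{\bm F}({\bm x}^{(q)})\to\cdots$ is uniquely determined by determinism and eventually enters $\chi$, so a directed path from ${\bm x}^{(q)}$ to ${\bm x}^{(P_0)}$ exists; any two such paths differ only by extra loops around the period-$T$ cycle, hence their lengths agree modulo $T$, and since $\lambda^{T} = 1$ the value $\lambda^{-D_q}$ does not depend on the choice of path. Writing ${\bm F}({\bm x}^{(q)}) = {\bm x}^{(q')}$ (with $q'\in Q_S$ by forward invariance), the successor lies one step closer to ${\bm x}^{(P_0)}$ along the directed path, so $D_{q'}\equiv D_q - 1\pmod{T}$; in the wrap-around case $q = P_0$ one has $D_q\equiv 0$, $q' = P_1$, and $D_{q'} = T-1\equiv -1$, still consistent. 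Therefore
\begin{align}
(\hat{K}c)({\bm x}) = c({\bm F}({\bm x})) = \lambda^{-D_{q'}} = \lambda^{-(D_q-1)} = \lambda\,\lambda^{-D_q} = \lambda\,c({\bm x}),
\end{align}
which establishes the eigenvalue equation. Single-valuedness of the coefficient at ${\bm x}^{(P_0)}$ forces $\lambda^{T} = 1$, i.e. $\lambda = \exp(2\pi i k/T)$ with $k = 0,\dots,T-1$; this also shows $c\not\equiv 0$, and the construction reduces to Lemma 5 when $S = \chi$ and to Lemma 4 when $k=0$.

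The main obstacle is precisely making the distance $D_q$ --- and hence the observable $c$ --- well defined: the ``distance to ${\bm x}^{(P_0)}$'' is canonical only once one notes that the deterministic forward dynamics fixes the path up to loops around $\chi$, an ambiguity annihilated by $\lambda^{T} = 1$. Once this is in place, the rest is the same two-case bookkeeping used for the earlier lemmas, together with the observation $D_{q'}\equiv D_q-1\pmod T$ that propagates the eigenvalue factor along every directed link.
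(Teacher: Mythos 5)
Your proposal is correct and follows essentially the same route as the paper: the same split into ${\bm x}\notin S$ and ${\bm x}\in S$, propagation of the factor $\lambda$ along each directed link via $D_{q'}=D_q-1$, and the $T$-periodicity at ${\bm x}^{(P_0)}$ forcing $\lambda^T=1$. Your added argument that $D_q$ is only well defined modulo $T$ (and that $\lambda^T=1$ absorbs this ambiguity) is a point the paper's proof leaves implicit, and it is a worthwhile clarification rather than a different approach.
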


\begin{proof}
If ${\bm x} \notin S$, we have ${\bm F}({\bm x}) \notin S$ and $c({\bm x}) = c({\bm F}({\bm x})) = 0$; therefore, the eigenvalue equation~(\ref{eigenvalueeq1}) holds. If ${\bm x} \in S$, we have
\begin{align}
c({\bm x}) = \sum_{q \in Q_S} \lambda^{-Dq} b_q({\bm x}) = \lambda^{-D},
\end{align}
where $D$ is the distance from ${\bm x}$ to ${\bm x}^{(P_0)}$, and
\begin{align}
\hat{K} c({\bm x}) =  \sum_{q \in Q_S} \lambda^{-Dq} b_q({\bm F}({\bm x})) 
= \lambda^{-D + 1},
\end{align}
which follows from the fact that ${\bm F}({\bm x})$ is one step closer to ${\bm x}^{(P_0)}$ than ${\bm x}$ along the directed path, and thus the distance from ${\bm F}({\bm x})$ to ${\bm x}^{(P_0)}$ is $D-1$.
Thus, the eigenvalue equation~(\ref{eigenvalueeq1}) is also satisfied.
Now, if we consider ${\bm x}$ in $\chi$, e.g., ${\bm x} = {\bm x}^{(P_0)}$, we have
$c({\bm x}^{(P_0)}) = 1 = c({\bm x}^{(P_T)}) = \lambda^{-T}$ because of the $T$-periodicity; hence, the eigenvalue $\lambda$ should be $\lambda = \exp(2\pi i k / T)$ ($k=0, 1, ..., T-1$).
\end{proof}

Similar to the  case of the isolated orbit, when the connected component has a period-$T$ orbit, we can construct $T$ independent eigenfunctions, now with non-zero components also on the states not included in the periodic orbit. Note that Lemma 5 is reproduced when the periodic orbit is isolated and $\chi = S$.

\subsection{Number of independent eigenfunctions}

Let us consider the total number of independent eigenfunctions constructed above. We focus on a single connected component of the state-transition network of size $T+U$, consisting of a period-$T$ orbit ($T$ states) and other states not included in the periodic orbit ($U$ states).
As mentioned in Sec.II B, a single connected component can include only a single periodic orbit, so any connected component of the network can be regarded as such.
For the eigenvalue $0$, we found that the indicator functions $b_q({\bm x})$ of the states $q$
not included in the periodic orbit are the (generalized) eigenfunctions; hence, there are $U$ independent eigenfunctions.
On the other hand, for the $T$ eigenvalues on the unit circle (including $1$), we obtained $T$ associated eigenfunctions which are mutually independent.
Thus, we obtained $T+U$ independent eigenfunctions for a single connected component of the network of size $T+U$; namely, we could obtain the complete set of eigenfunctions that span the space of observables defined in the connected component under consideration.
In particular, only one eigenfunction with the form of Eq.~(\ref{uniformeigenfunction}) associated with the eigenvalue $1$ exists for each connected component.
The same argument applies to all other connected components; hence, we can obtain the complete set of eigenfunctions to span the observables defined in the whole state-transition network.
From the above results, the following theorems follow:
\begin{theorem}
The algebraic multiplicity of the eigenvalue $0$ is equal to the number of system states that are not included in any periodic orbits. The system without eigenvalue $0$ is reversible.
\end{theorem}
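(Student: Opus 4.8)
The plan is to combine the block structure of the Koopman matrix over connected components with the count of independent (generalized) eigenfunctions carried out in Sec.~IV D. First I would recall that, by reordering the states component by component (as in Sec.~III C), $K = A^{\sf T}$ is similar to a block-diagonal matrix $\tilde K = \mathrm{diag}(K_1, K_2, \dots)$, one block per connected component of the state-transition network. Since similar matrices have the same characteristic polynomial, the algebraic multiplicity of the eigenvalue $0$ in $K$ is the sum of its algebraic multiplicities in the blocks $K_i$, so it suffices to analyze a single connected component.

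Fix a connected component with $T+U$ states, $T$ of which lie on its (unique) periodic orbit and $U$ of which do not. By Lemma~2, the $U$ indicator functions of the non-periodic states are generalized eigenvectors of $K_i$ for the eigenvalue $0$; by Lemma~6 (or Lemma~5 when the orbit is isolated), the $T$ functions built from the $T$-th roots of unity are eigenvectors of $K_i$ with the $T$ distinct nonzero eigenvalues $e^{2\pi i k/T}$, $k=0,\dots,T-1$. As argued in Sec.~IV D, these $T+U$ vectors are linearly independent and hence form a basis of the $(T+U)$-dimensional block space, i.e., a Jordan basis of $K_i$. A Jordan basis decomposes the space into generalized eigenspaces whose dimensions are the algebraic multiplicities: the $T$ root-of-unity eigenvectors span the (one-dimensional) generalized eigenspaces of the nonzero eigenvalues, so the remaining $U$-dimensional span of the indicator functions of the non-periodic states is exactly the generalized eigenspace of $0$. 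Hence $0$ has algebraic multiplicity $U$ in $K_i$. Summing over all connected components, the algebraic multiplicity of $0$ in $K$ equals $\sum_i U_i$, which is precisely the total number of system states not belonging to any periodic orbit (including stationary states), proving the first claim.

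For the second claim, suppose $K$ has no eigenvalue $0$. By the first part, $U_i = 0$ for every connected component, i.e., every state lies on a periodic orbit. Then ${\bm F}$ restricted to each component is a cyclic permutation of its $T_i$ states, so ${\bm F}$ permutes the $2^N$ states of $M$ and is in particular bijective; equivalently, the system has no garden-of-Eden state. By the characterization recalled in Sec.~II B, the ECA is therefore reversible.

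The step requiring the most care is the passage from ``$T+U$ linearly independent (generalized) eigenvectors, split $U$ to eigenvalue $0$ and $T$ to the unit circle'' to ``the algebraic multiplicity of $0$ equals $U$.'' This is exactly the assertion that a complete system of generalized eigenvectors realizes the Jordan decomposition and thereby reads off the algebraic multiplicities; to invoke it cleanly one must ensure that (i) the whole collection of $T+U$ vectors is genuinely linearly independent, which is established in Sec.~IV D, and (ii) no root-of-unity eigenvector secretly contributes to the $0$-eigenspace, which is immediate since $0$ is not a $T$-th root of unity. Everything else is bookkeeping with the block decomposition and the earlier lemmas.
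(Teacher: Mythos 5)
Your proposal is correct and follows essentially the same route as the paper: the paper also reduces to a single connected component, counts the $U$ generalized eigenfunctions at $0$ from Lemma~2 against the $T$ unit-circle eigenfunctions from Lemmas~5--6 to get a complete set of $T+U$ independent vectors, and deduces reversibility from the absence of garden-of-Eden states. You merely make explicit two steps the paper leaves implicit (the block-diagonal reduction preserving the characteristic polynomial, and the Jordan-basis reading of algebraic multiplicities), which is a welcome tightening rather than a different argument.
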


\begin{theorem}
Each connected component corresponds to a single eigenfunction with the eigenvalue $1$.
Thus, the multiplicity of the eigenvalue $1$ gives the number of connected components in the whole state-transition network (and also the number of independent conserved quantities).
\end{theorem}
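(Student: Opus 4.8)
The plan is to reduce the global count to a per--component count and then sum. Because states in different connected components have independent dynamics, a reordering of the indices brings the Koopman matrix into block--diagonal form $\tilde K = \mathrm{diag}(K_{S_1}, \dots, K_{S_c})$ with one block $K_{S_i}$ per connected component $S_i$, exactly as in the rule--$120$ example of Sec.~III~C; this is immediate from determinism together with the definition of weak connected components (no transitions between distinct components in either direction). The algebraic and geometric multiplicities of any eigenvalue of $K$ are then the sums of the corresponding multiplicities over the blocks, so it suffices to prove that each block $K_{S_i}$ has $1$ as an eigenvalue of multiplicity exactly one, with one--dimensional eigenspace spanned by the uniform eigenfunction of Lemma~4.

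Next I would work inside a single block. Fix a component $S$ of size $T+U$, where $T$ is the length of the unique periodic orbit contained in $S$ (Sec.~II~B) and $U$ is the number of its states not lying on that orbit. Lemma~6 supplies, for each $k = 0, \dots, T-1$, an explicit eigenfunction of $K_S$ with eigenvalue $\exp(2\pi i k/T)$, and Lemma~2 shows that the $U$ indicator functions $b_q$ attached to the non--periodic states of $S$ are (generalized) eigenfunctions with eigenvalue $0$. These $T+U$ functions are linearly independent---the $T$ unit--circle ones already are, by a Vandermonde/roots--of--unity argument on the periodic indices, and the $U$ indicators are supported off the orbit---so, as in the counting of independent eigenfunctions in Sec.~IV, they form a basis of the observables supported on $S$. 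Among them, the only one with eigenvalue $1$ is the $k=0$ function, which is precisely the uniform eigenfunction $c(\bm x) = \sum_{q \in Q_S} b_q(\bm x)$ of Lemma~4.

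From this basis I would pin down the whole spectrum of $K_S$. The span of the $U$ indicators of non--periodic states is $K_S$--invariant---the predecessors of a non--periodic state are again non--periodic, since a periodic predecessor would force its image to be periodic---and $K_S$ acts nilpotently on it, because $K^{m+1}$ annihilates each such $b_q$ with $m$ the relevant tree depth; its complement, spanned by the $T$ unit--circle eigenfunctions, is $K_S$--invariant and semisimple, carrying the distinct $T$-th roots of unity. Hence the characteristic polynomial of $K_S$ is $x^{U}(x^{T}-1)$, so $\lambda = 1$ is a \emph{simple} eigenvalue of $K_S$. Summing over the $c$ blocks gives algebraic multiplicity exactly $c$ for $K$, i.e.\ the number of connected components; and since $\lambda = 1$ is non--defective, the space of conserved quantities $\ker(\hat K - \hat I)$ has dimension $c$, spanned by $\{c_{S_1}, \dots, c_{S_c}\}$, which establishes the statement on independent conserved quantities.

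The main obstacle I anticipate is precisely the upgrade from ``there exist $T+U$ independent (generalized) eigenfunctions'' to ``the algebraic multiplicity of $1$ equals one'': the exhibited functions by themselves only show the multiplicity is at least one and leave open whether a higher--rank generalized eigenvector sits above $\lambda = 1$. The invariant--subspace splitting above closes this gap, since it forces the nilpotent part of $K_S$ to carry the full eigenvalue--$0$ multiplicity $U$ and the remaining part to be semisimple. Everything else---the block--diagonalization, the independence count, the identification of conserved quantities with eigenfunctions at eigenvalue $1$---is bookkeeping with results already proved.
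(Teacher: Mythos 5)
Your proposal is correct and follows essentially the same route as the paper: decompose the Koopman matrix into blocks indexed by the connected components, and within each component of size $T+U$ exhibit the $U$ (generalized) eigenfunctions at eigenvalue $0$ from Lemma~2 and the $T$ unit-circle eigenfunctions from Lemma~6, of which only the $k=0$ one (the uniform eigenfunction of Lemma~4) has eigenvalue $1$. The only real difference is that you make explicit, via the invariant-subspace splitting into a nilpotent part and a semisimple part with characteristic polynomial $x^{U}(x^{T}-1)$, why the algebraic multiplicity of $\lambda=1$ is exactly one per component, a point the paper settles more tersely by noting that the $T+U$ exhibited independent (generalized) eigenfunctions already form a complete basis of the observables supported on that component.
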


\begin{remark}
As stated in Sec.~III A, the eigenfunction $c ({\bm x})$ associated with the eigenvalue $1$ gives a conserved quantity of the system and can play important roles in characterizing the system dynamics.
If there are two or more independent eigenfunctions with eigenvalue $1$, linear sums of those eigenfunctions are also conserved quantities.
For example, for rule 184, the total number of black cells is conserved.
The corresponding Koopman eigenfunction $c_{black}({\bm x}) $ can be expressed as
\begin{equation}
c_{black}({\bm x}) = \sum_{s=1}^M N_s c_s ({\bm x}).
\end{equation}
Here, $M$ is the number of connected components in the network and hence the number of independent eigenfunctions with eigenvalue $1$, denoted by $\{ c_1,c_2,...,c_M \}$, and $N_s$ is the number of black cells in the states belonging to the connected component corresponding to the eigenfunction $c_s$.
\end{remark}

\begin{remark}
By using the Koopman eigenfunction $c({\bm x})$ associated with the eigenvalue $\lambda = \exp(2 \pi i / T)$ in Eq.~(\ref{eq33}), we can introduce the ``asymptotic phase'' of ECA, generalizing the notion used in continuous-time dynamical systems.
For an asymptotically stable limit cycle in continuous-time dynamical systems, the set of states that converge to the same state on the limit cycle is called an isochron and the same asymptotic phase~\cite{mauroy2013isochron,shirasaka2017phase}, typically taken in $[0, 2\pi)$, is assigned to it. It is known that the asymptotic phase can be represented by using the Koopman eigenfunction associated with the natural frequency of the limit cycle.
Similarly, we can define the isochron of ECA as the set of states that are equally distant from a reference state on the periodic orbit.
The asymptotic phase $\theta({\bm x}^{(q)})$ of the state ${\bm x}^{(q)} \in S$ whose distance to a reference state ${\bm x}^{(P_0)} \in \chi$ along the path is $D_q$ can be represented by using $c({\bm x})$ as 
\begin{align}
\theta({\bm x}^{(q)}) = i \log(c({\bm x}^{(q)})) = i \log \lambda^{-D_q} = i \log(\exp(-2\pi i D_q / T)) \in [0,2\pi),
\end{align}
where we take the principal value of the logarithm in the range $[0, 2\pi)$. We then have $\theta({\bm F}({\bm x}^{(q)})) - \theta({\bm x}^{(q)}) = 2 \pi / T$ for any ${\bm x}^{(q)} \in S$, namely, the asymptotic phase always increases with a constant frequency $2\pi/T$ as the system state evolves. 
\end{remark}

\begin{remark}
Though we focused on the Koopman operator $\hat{K}$ in this study, we can also consider the Perron-Frobenius operator~\cite{lasota2008probabilistic} $\hat{P}$, an adjoint operator of $\hat{K}$, and consider its spectral properties in a similar way to $\hat{K}$. See the Appendix for a brief explanation on the Perron-Frobenius operator.
\end{remark}

\section{RESULTS FOR ECA WITH 13 CELLS}

\subsection{Setup}

We now perform a thorough numerical analysis of all rules of ECA on a lattice of $13$ cells with periodic boundary conditions.
Since Wolfram's classification is for large systems, it is desirable to use a large number of cells for the numerical analysis.
However, as the total number $2^N$ of the system states grows exponentially fast with the number of cells $N$, numerical analysis of the Koopman matrix quickly becomes impossible.
In what follows, we use a lattice of $13$ cells with periodic boundary conditions, where $13$ was the largest prime number (chosen to avoid atypical size-dependent dynamics of ECA) that could be used to calculate the eigenvalues of $2^{13} \times 2^{13}$ matrices by a computer with 16GB memory.
For each independent ECA rule, we constructed the state-transition network and the Koopman matrix and numerically calculated all the Koopman eigenvalues.

\begin{figure}[h!]
    \centering
    \includegraphics[width=0.85\hsize]{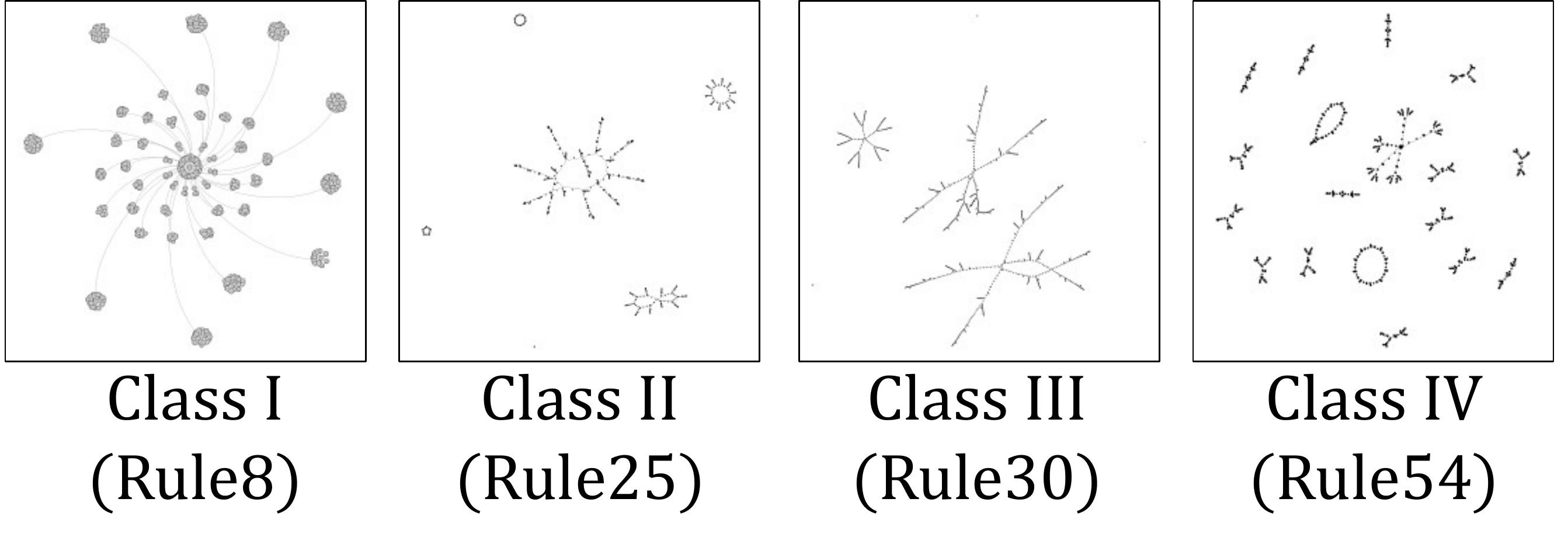}
    \caption{Typical transition networks of the four classes in ECA.}
    \label{fig5}
\end{figure}

\subsection{Properties expected for the Koopman eigenvalues}

Figure~\ref{fig5} shows typical state-transition networks for representative rules from Wolfram's four classes, i.e., rule 8 from class I, rule 25 from class II, rule 30 from class III, and rule 54 from class IV, for the case with $N=10$ cells. Here, because the state-transition networks with $N=13$ cells has too many nodes to be drawn, this figure shows the network for $N=10$ cells; the rest of the numerical calculations for the spectral properties were performed for $N=13$ cells.
We can observe that the network topology shows characteristic properties for each class.
From the asymptotic dynamical property of each class, the following properties for the Koopman eigenvalues are expected (zero eigenvalue does not arise in reversible ECA).
\begin{itemize}

\item Class I: As the system converges to a homogeneous stationary state, the Koopman eigenvalues take either $0$ (with large multiplicity) or $1$ (with small multiplicity).

\item Class II: As the system converges to a periodic orbit with a relatively short period, the Koopman eigenvalues take $T$ values, $\exp(2 \pi i k / T)$ where $T$ is the period and $k=0, 1, ..., T-1$, on the unit circle. Because orbits with various periods can coexist, the overall distribution of eigenvalues on the unit circle may not be even. Eigenvalues $0$ with large multiplicity also arise.

\item Classes III and IV: The system exhibits chaotic or complex dynamics. The orbits are still periodic, but their periods are typically much longer than the cases in class II. Thus, a large number of Koopman eigenvalues distribute on the unit circle, typically unevenly because orbits with different periods coexist.
Eigenvalues $0$ with large multiplicity also arise.

\end{itemize}

Figure~\ref{figall} shows the Koopman eigenvalues in the complex plane for all 256 (88 independent) rules of ECA with 13 cells and periodic boundary conditions. In the caption of each figure, Wolfram's class (I to IV) and the multiplicities of the eigenvalues $0$ and $1$ are shown. The eigenvalue $1$ corresponds to either a periodic or stationary state; hence, the number of stationary states is also shown.
The rules \{15, 85\}, 51, 204, \{154, 166, 180, 210\}, \{170, 240\}, \{45, 75, 89, 101\}, 105, and 150 do not possess zero eigenvalues and are reversible. These results coincide with the results obtained for systems of arbitrary sizes by de Brujin graph analysis for 13 cells~\cite{nobe2004reversible}.
We can observe that the distributions of the Koopman eigenvalues roughly follow our expectations above, but there are also some exceptions.
This is because the Koopman matrix has the whole information of the system dynamics, including non-typical ones, while Wolfram's classification is for typical dynamics of the system, and also because of the small system size used in the numerical analysis.
In Subsections V C-V E, we describe qualitative properties of the typical dynamics and eigenvalue distributions for each class.

\begin{figure}[h!]
    \centering
    \includegraphics[width=0.88\hsize]{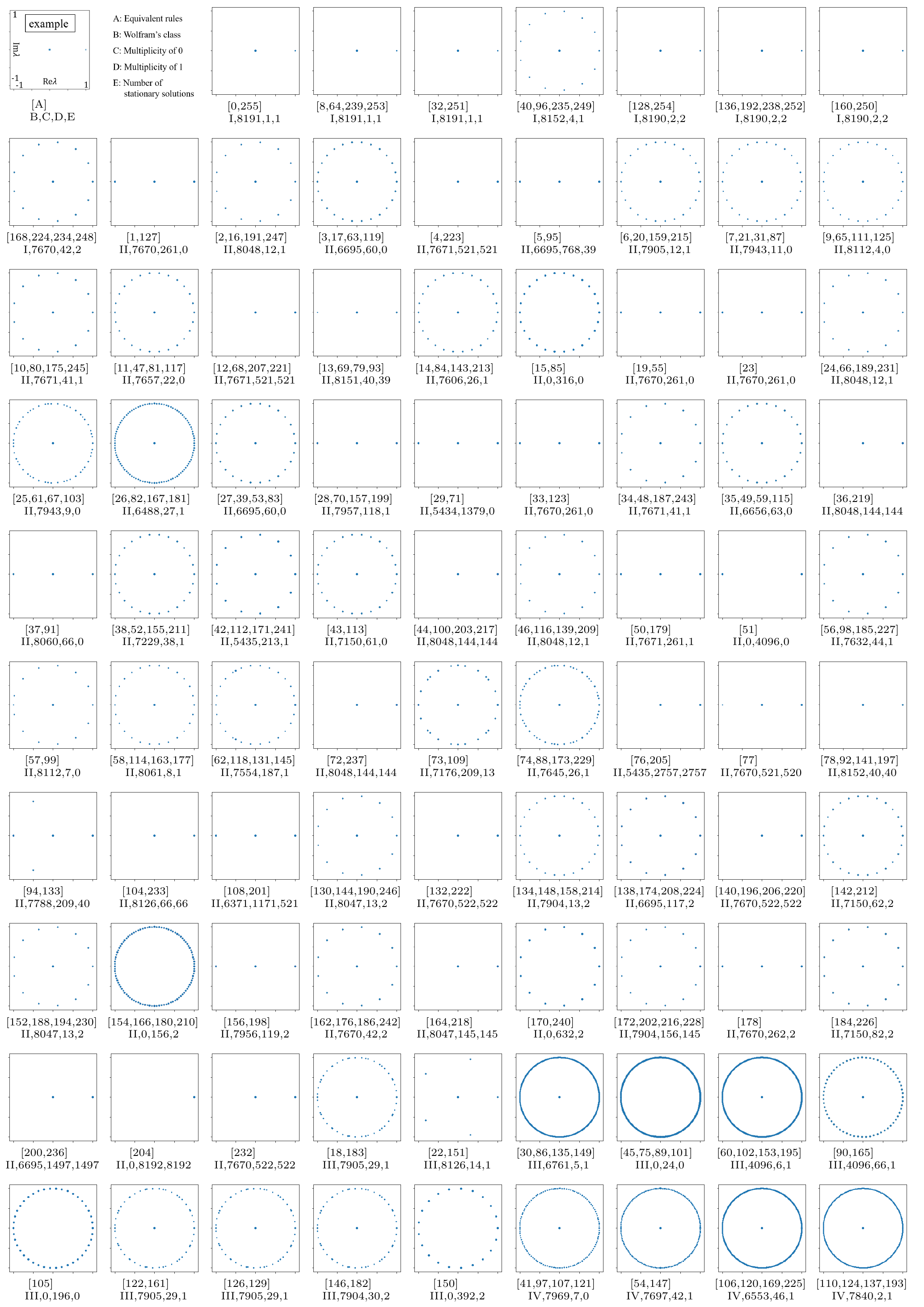}
    \caption{Koopman eigenvalues for all 256 rules of 13-cell ECA. The caption below each graph shows the set of  equivalent rules, class, and multiplicities of the $1$ and $0$ eigenvalues.}
    \label{figall}
\end{figure}

\begin{figure}[b]
    \centering
    \includegraphics[width=0.8\hsize]{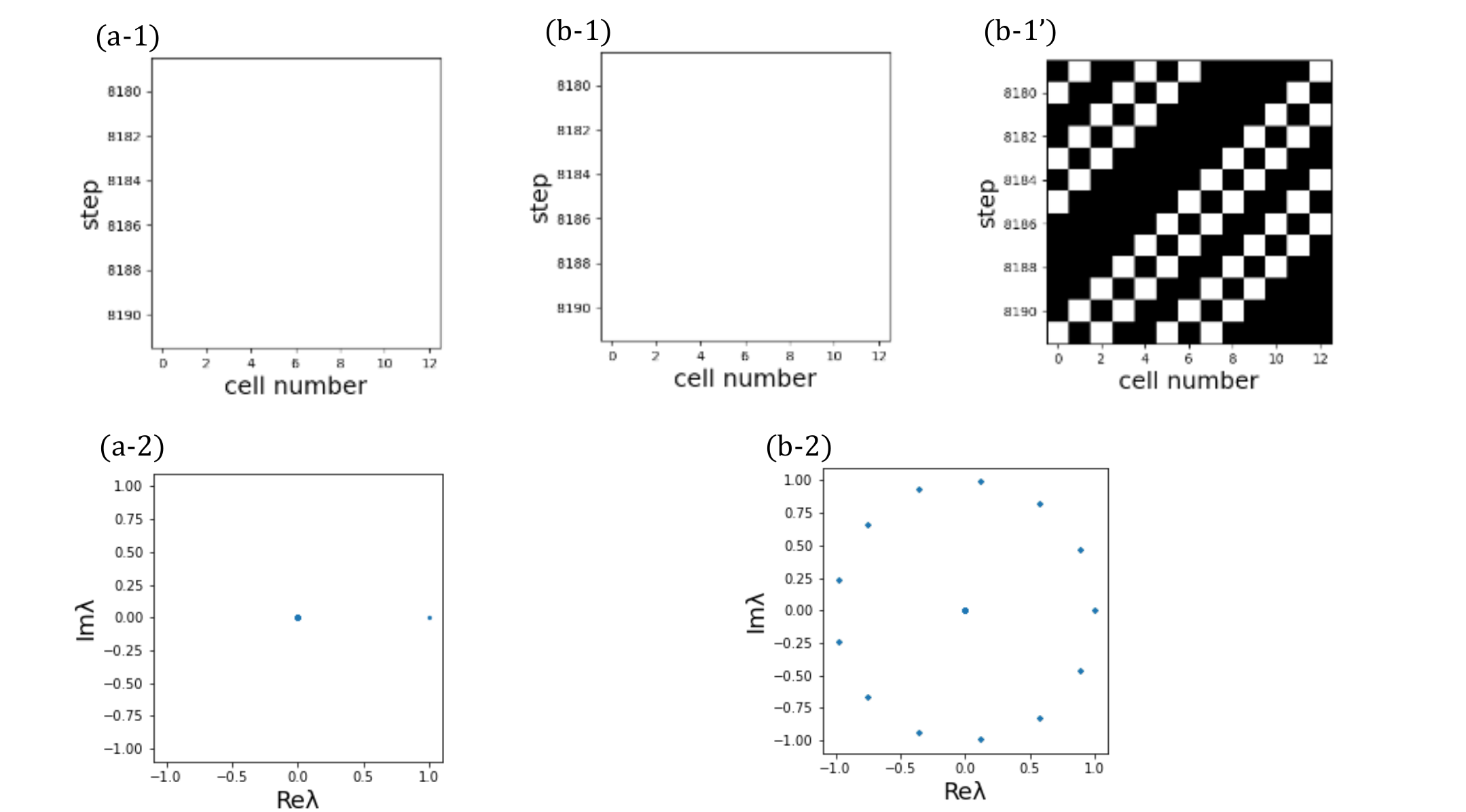}
    \caption{Dynamics and Koopman eigenvalues of class I rules. 
    (a) Rule 8. 
    (b) Rule 168. Results for typical homogeneous stationary states and for a non-typical traveling-wave state are shown.}
    \label{fig7}
\end{figure}

\subsection{Class I}

As shown in Fig.~\ref{figall}, for most of the rules, the Koopman eigenvalues are either $1$ (multiplicity $1$ or $2$) or $0$ (large multiplicity), as expected.
For the rules 40, 168, and their equivalents, the Koopman eigenvalues take $13$ values on the unit circle including $1$, in addition to $0$.
Indeed, these rules have non-typical traveling-wave states, corresponding to period-$13$ orbits, in addition to the typical homogeneous stationary state.
Figure~\ref{fig7} shows the dynamics and eigenvalues for rules 8 and 168 from class I, including a non-typical traveling-wave state of rule 168.

\subsection{Class II}

As shown in Fig.~\ref{figall}, for most of the rules, the Koopman eigenvalues evenly distribute on the unit circle (including $1$) or $0$. In contrast to the case of class I, the multiplicity of the eigenvalue $1$ can be relatively large.
The number of eigenvalues on the unit circle is typically up to $13$ and at most $104$.
Figure~\ref{fig8} shows typical dynamics and eigenvalues obtained for four rules from class II, which are briefly described below.

\begin{figure}[h!]
    \centering
    \includegraphics[width=\hsize]{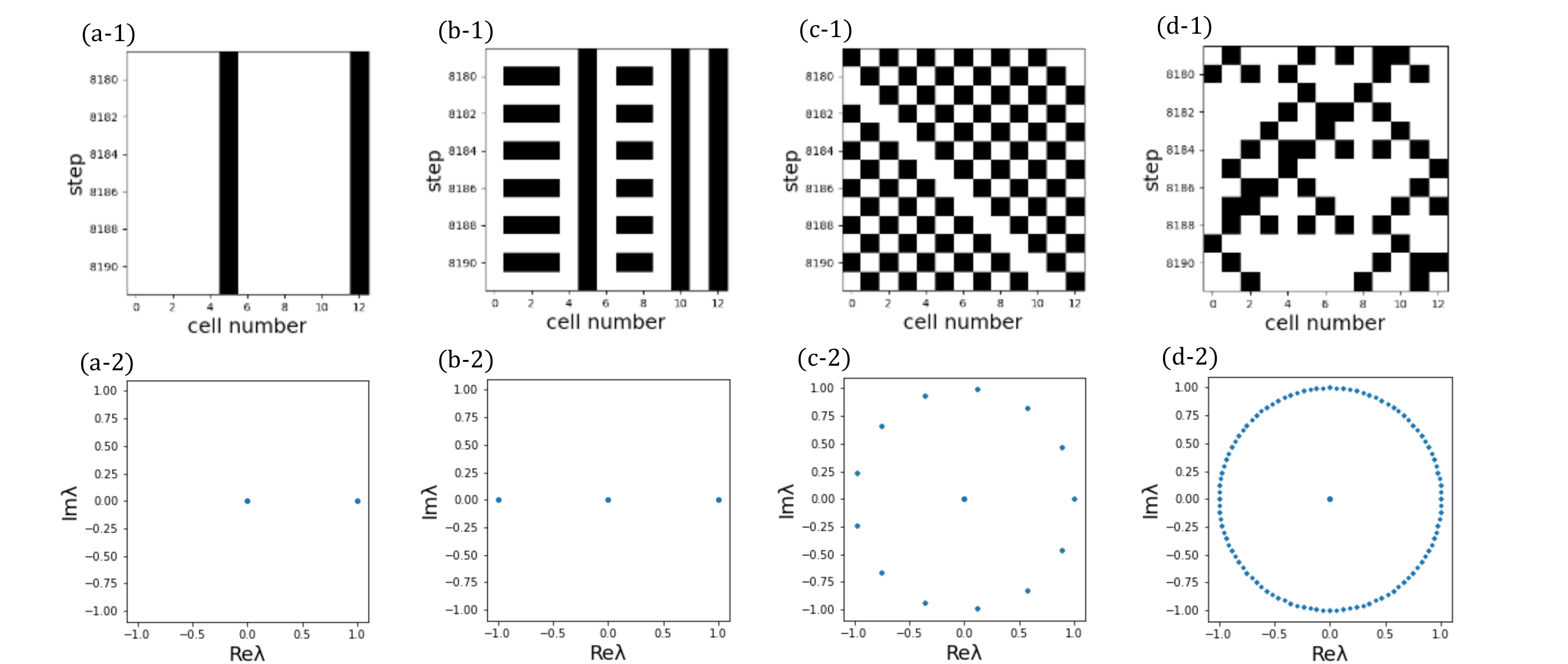}
    \caption{Dynamics and Koopman eigenvalues of class II rules. (a) Rule 4. (b) Rule 5. (c) Rule 184. (d) Rule 26.
}
    \label{fig8}
\end{figure}

\begin{list}{}{}

\item a) {\it Stationary states}. The system converges to a spatially inhomogeneous stationary state as shown in Fig.~\ref{fig8}(a) for rule 4, yielding Koopman eigenvalues $0$ and $1$. As the system can possess many different stationary inhomogeneous states (as well as periodic orbits), the multiplicity of the eigenvalue $1$ can be relatively large.

\item b) {\it Oscillatory states}. The system exhibits periodic oscillations of relatively short periods as shown in Fig.~\ref{fig8}(b) for rule 5, yielding Koopman eigenvalues $0$ and on the unit circle. As the system can possess many different periodic orbits (as well as stationary states), the multiplicity of the eigenvalue $1$ can be relatively large.

\item c) {\it Traveling-wave states}. The system's pattern translates to either of the directions without changing the shape as shown in Fig.~\ref{fig8}(c), which corresponds to a period-$13$ state. The Koopman eigenvalues take $13$ values on the unit circle in addition to $0$.

\item d) {\it Oscillatory-traveling states}. The system exhibits a mixture of traveling and oscillating patterns as shown in Fig.~\ref{fig8}(d) for rule 26, which travels to either of the directions while oscillating at the same time. The periods are between $26$ and 104. Relatively many eigenvalues arise on the unit circle (the maximum number of different Koopman eigenvalues on the unit circle is $104$ for rules $26$, $154$, and the equivalents).

\end{list}

\subsection{Classes III and IV}

As shown in Fig.~\ref{figall}, in most of the rules, at least $13$ eigenvalues appear on the unit circle.
In many rules, relatively many periodic orbits with different periods coexist, resulting in a large number of eigenvalues distributed on the unit circle. 
In some rules, small number (e.g., $5$ for rule 22) of eigenvalues appear on the unit circle, despite being in class III. This corresponds to relatively short-period oscillatory states, which can arise in small systems; in larger systems, the same rules can exhibit chaotic dynamics.
Figure~\ref{fig9} shows the dynamics and eigenvalues for rules 30 and 54 from class III and IV, respectively.
In these rules, the system exhibits complex patterns, which are periodic but  with very large periods. Thus, a large number of eigenvalues arise on the unit circle.

\begin{figure}[h!]
    \centering
    \includegraphics[width=0.55\hsize]{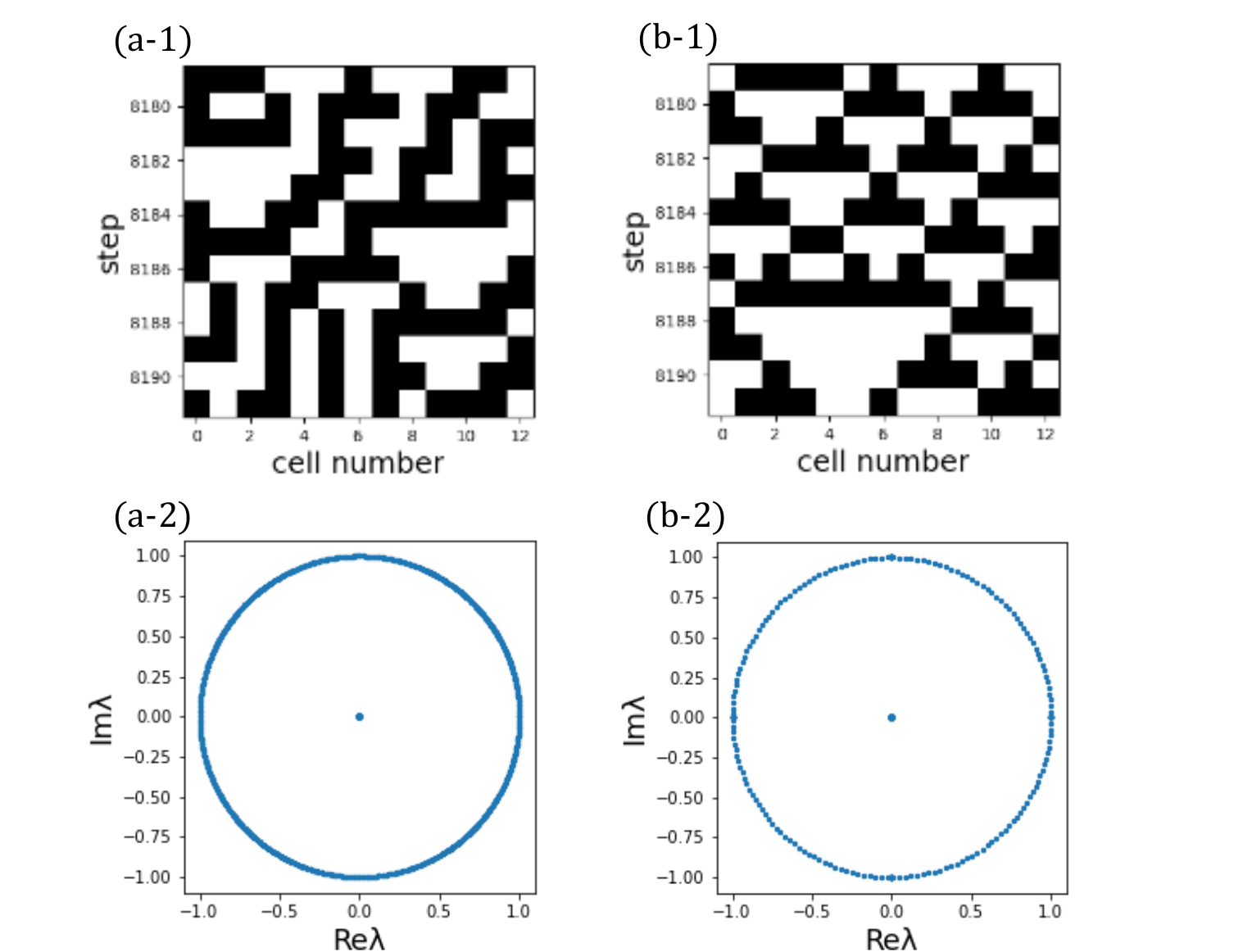}
    \caption{Dynamics and Koopman eigenvalues for class III and IV rules. (a) Rule 30, class III. (b) Rule 54, class IV.}
    \label{fig9}
\end{figure}

\section{DISCUSSION}

\subsection{Number of independent eigenvalues and lengths of periodic orbits}

In Sec. V, we numerically calculated the Koopman eigenvalues of $13$-cell ECA for all independent rules.
In our numerical analysis for several representative rules using different system sizes, the observed dependence of the number of different Koopman eigenvalues on the number of cells $N$ was typically as follows. 
(i) For the rules in class I and class II, only two or three different eigenvalues were obtained if asymptotic solutions were uniform or inhomogeneous stationary, and the number of the eigenvalues increased linearly with $N$ if traveling-wave solutions existed. The period of the system dynamics was at most $O(N)$.
(ii) For the rules with chaotic dynamics as those in class III and class IV, the number of independent eigenvalues could increase exponentially with $N$. The maximum period of the dynamics could also increase very rapidly.
From this observation, we conjecture that, in the limit $N \to \infty$, the set of different eigenvalues $\sigma$ has $card(\sigma)=\aleph_0$ for the rules in class I and class II, while $card(\sigma) = \aleph_1$ for the rules in class III and IV.
It is also conjectured that, in the limit $N \to \infty$, the maximum period of the dynamics diverges and the system can exhibit non-periodic chaotic dynamics characterized by densely distributed eigenvalues on the unit circle.
The scaling of the number of different eigenvalues with $N$ is thus expected to characterize the qualitative differences in the dynamical properties between the rules and may be used for the classification of ECA. 

\subsection{Dynamic Mode Decomposition}

In this study, we derived a matrix representation of the Koopman operator for ECA, which was equivalent to the full state-transition network of the ECA, and analyzed its spectral properties.
Owing to this representation, we could obtain the rigorous results in Sec. IV on the relationships between the dynamics of the ECA and the spectral properties of the Koopman operator that hold generally for any system size.
However, because the size of the Koopman matrix is $2^N \times 2^N$, numerical analysis of its spectral properties becomes quickly impractical as $N$ is increased.
Thus, the number of cells that we could use for the numerical analysis was $N=13$ in this study and it is difficult to increase this number largely.

The DMD and its extensions are standard, well-known methods for estimating the Koopman eigenvalues using  time-series data, which have been extensively used for the analysis of spatially extended systems such as fluid flows~\cite{schmid2010dynamic,rowley2009spectral}.
The DMD methods can also be used to estimate the Koopman eigenvalues of ECA.
In the case of the simplest DMD, we only need to construct an $N \times N$ matrix from the time-series data for estimating the Koopman eigenvalues, in contrast to the $2^N \times 2^N$ Koopman matrix; thus, the method is applicable to much larger systems.
Our preliminary analysis indicates that the DMD can reproduce the correct Koopman eigenvalues for relatively simple dynamics of ECA as long as $N$ is larger than the maximum period of the system.
When the maximum period exceeds $N$ but is still not too large, we can use generalized DMD methods, such as the Hankel DMD that uses delay-embedded time series, to estimate the correct Koopman eigenvalues.
Thus, the DMD methods provide alternative ways to estimate the Koopman eigenvalues for larger systems without resorting to the Koopman matrix.
However, our preliminary analysis also shows that some of the DMD methods, including the Hankel DMD, cannot always reproduce all of the Koopman eigenvalues correctly.
A systematic numerical investigation on the DMD analysis of ECA will be reported in our subsequent study.

In addition to DMD, we may be able to exploit the sparseness of the state-transition and Koopman matrices to develop more efficient methods that facilitate direct numerical analysis of the Koopman operator for large systems. Such approach would also be helpful for a more detailed analysis of the Koopman spectral properties of ECA.

\section{CONCLUSIONS}

We performed a Koopman spectral analysis of ECA. By introducing the one-hot representation, we derived a matrix representation of the Koopman operator. 
We showed that the Koopman eigenvalues are either zero or distributed on the unit circle (including $1$), and constructed the associated Koopman eigenfunctions. 
We then numerically calculated the Koopman eigenvalues of all $88$ independent rules for ECA with 13 cells and periodic boundary conditions.

The Koopman eigenvalues and their multiplicities reflect the topology of the state-transition network. The eigenvalues on the unit circle reflect the periods of the orbits embedded in the individual connected components.
The multiplicity of zero eigenvalues represent the number of states that are not included in any of the periodic orbits, and the reversibility of each rule can be judged from the non-existence of the eigenvalue $0$.
The Koopman eigenfunctions with the eigenvalue $1$ correspond to individual connected components of the state-transition network, and the multiplicity of the eigenvalue $1$ gives the number of connected components in the state-transition network and also the number of independent conserved quantities of the system.
We found that the distributions of the Koopman eigenvalues roughly correspond to Wolfram's classification, but there were also some exceptions.

A more detailed analysis on the relationships between the Koopman eigenvalues and the dynamical properties of ECA, together with systematic DMD analysis of larger systems, will be reported in our subsequent study.

\acknowledgments{We are grateful to Prof. Igor Mezi\'c for useful comments. We acknowledge JSPS KAKENHI JP17H03279, JP18H03287, JPJSBP120202201, JP20J13778, and JST CREST JP-MJCR1913 for financial support.}

\section*{DATA AVAILABILITY  STATEMENT}
The data that supports the findings of this study are available within the article.

\appendix

\section{PERRON-FROBENIUS OPERATOR}
Let us consider the adjoint of the Koopman operator, the Perron-Frobenius operator~\cite{lasota2008probabilistic}.
For ECA, the Perron-Frobenius operator $\hat{P}$ is defined by
\begin{align}
(\hat{P} g)({\bm x}) = \sum_{{\bm y} \in {\bm F}^{-1}({\bm x})} g({\bm y})
\end{align}
for a function $g : M \to {\mathbb C}$, where ${\bm y} \in {\bm F}^{-1}({\bm x})$ represents the set of states $\{{\bm y}\}$ satisfying ${\bm F}({\bm y}) = {\bm x}$.
The operation of $\hat{P}$ on the indicator function $b_q({\bm x})$ ($q=1, ..., 2^N$) can be expressed as follows. Let ${\bm x}^{(p)}$ the $p$th ($p=1, ..., 2^N$) state of the ECA. Then,
\begin{align}
(\hat{P} b_q)({\bm x}^{(p)}) = \sum_{{\bm y} \in {\bm F}^{-1}({\bm x}^{(p)})} b_q({\bm y}) = \sum_{r=1}^{2^N} b_q({\bm x}^{(r)}) A_{pr} = A_{pq}  = \sum_{r=1}^{2^N} A_{rq} b_r({\bm x}^{(p)}),
\end{align}
where $A$ is the state-transition matrix [$A_{pr} = 1$ if ${\bm F}({\bm x}^{(r)}) = {\bm x}^{(p)}$ and $A_{pr} = 0$ otherwise] and we used $b_q({\bm x}^{(r)}) = \delta_{q, r}$ ($\delta$ is the Kronecker delta). Since this holds for any $q, r \in \{1, ..., 2^N\}$, we have
\begin{align}
(\hat{P} b_q)({\bm x}) = \sum_{r=1}^{2^N} A_{rq} b_r({\bm x}) = b_s({\bm x}),
\label{PF_indicator}
\end{align}
where $s$ is the index of the state into which ${\bm x}^{(q)}$ evolves.
Expressing a general function $g$ as
\begin{align}
g({\bm x}) = \sum_{q=1}^{2^N} g_q b_q({\bm x}),
\end{align}
where $g_q \in \mathbb{C}$ is the expansion coefficient ($q=1, ..., 2^N$), the operation of $\hat{P}$ on $g$ is expressed as
\begin{align}
(\hat{P} g)({\bm x}) = \sum_{q=1}^{2^N} g_q (\hat{P} b_q)({\bm x}) 
=
\sum_{q=1}^{2^N} \left( \sum_{r=1}^{2^N} A_{rq} g_q \right) b_r({\bm x})
=
\sum_{q=1}^{2^N} \left( \sum_{r=1}^{2^N} A_{qr} g_r \right) b_q({\bm x}).
\label{PF0}
\end{align}
Thus, the matrix representation of $\hat{P}$ is simply given by the state-transition matrix $A$, which is the transpose $K^{\sf T}$ of the Koopman matrix $K$.

The Perron-Frobenius operator $\hat{P}$ and the Koopman operator $\hat{K}$ are adjoint to each other 
with respect to the inner product of two functions $f, g : M \to {\mathbb C}$ defined as
\begin{align}
\langle f, g \rangle = \sum_{q=1}^{2^N} f_q^* g_q,
\end{align}
namely,
\begin{align}
\langle \hat{K} f, g \rangle 
= \sum_{q=1}^{2^N} \left( \sum_{r=1}^{2^N} K_{qr} f^*_r \right) g_q 
= \sum_{r=1}^{2^N} f^*_r \left( \sum_{q=1}^{2^N} A^{\sf T}_{qr} g_q \right) 
= \sum_{q=1}^{2^N} f^*_q \left( \sum_{r=1}^{2^N} A_{qr} g_r \right) 
= \langle f, \hat{P} g \rangle.
\end{align}
Because $\hat{K}$ and $\hat{P}$ are finite-dimensional operators and adjoint to each other, they share the  same eigenvalues, but the associated eigenfunctions are different.

The following lemmas show that the eigenfunctions of the Perron-Frobenius operator $\hat{P}$ can also be expressed by using the indicator function $b_q({\bm x})$. The first lemma is for the eigenvalue $0$.

\begin{lemma}
Assume that the system has a state ${\bm x}^{(q)} \in M$ with the index $q \in \{1, ..., 2^N\}$ that is not included in any of the periodic orbits (including stationary states).
Let $m > 0$ be the distance from ${\bm x}^{(q)}$ to the periodic orbit to which it is attracted, and denote the point of arrival on the periodic orbit as ${\bm x}^{(s)} = {\bm F}^m({\bm x}^{(q)})$.
Let ${\bm x}^{(r)} $ be the state on the periodic orbit that reaches ${\bm x}^{(s)}$ in $m$ steps, i.e., ${\bm x}^{(s)} = {\bm F}^m({\bm x}^{(r)})$. 
Then, the function 
\begin{align}
c({\bm x})=b_q ({\bm x}) - b_r ({\bm x}) 
\end{align}
is a generalized Perron-Frobenius eigenfunction of rank $m$ with the eigenvalue $0$, i.e.,
\begin{align}
( \hat{P}^m c) ({\bm x}) = 0.
\end{align}
\end{lemma}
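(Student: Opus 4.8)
The plan is to reduce the whole statement to the single algebraic fact, already established in Eq.~(\ref{PF_indicator}), that the Perron--Frobenius operator acts on an indicator function by pushing it forward one step along the dynamics: $\hat{P} b_q = b_{s}$, where $s$ is the index of ${\bm F}({\bm x}^{(q)})$. A trivial induction on the number of applications then gives $\hat{P}^{k} b_q = b_{q_k}$ for every $k \geq 0$, where $q_k$ denotes the index of ${\bm F}^{k}({\bm x}^{(q)})$. This is the only preliminary computation required.

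Granting this, the eigenvalue equation is immediate. By linearity of $\hat{P}$, $\hat{P}^{m} c = \hat{P}^{m} b_q - \hat{P}^{m} b_r = b_{q_m} - b_{r_m}$, where $q_m$ and $r_m$ are the indices of ${\bm F}^{m}({\bm x}^{(q)})$ and ${\bm F}^{m}({\bm x}^{(r)})$. By construction both of these states equal ${\bm x}^{(s)}$: the first because $m$ is the distance from ${\bm x}^{(q)}$ to the periodic orbit and ${\bm x}^{(s)}$ is the arrival point, the second by the definition of ${\bm x}^{(r)}$ as the orbit state mapped to ${\bm x}^{(s)}$ in $m$ steps. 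Hence $\hat{P}^{m} c = b_s - b_s = 0$, as claimed.

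To see that the rank is exactly $m$, I would verify $\hat{P}^{m-1} c \neq 0$. The same formula gives $\hat{P}^{m-1} c = b_{q_{m-1}} - b_{r_{m-1}}$, and since indicator functions of distinct states are linearly independent it is enough to check ${\bm F}^{m-1}({\bm x}^{(q)}) \neq {\bm F}^{m-1}({\bm x}^{(r)})$. This follows from the tree/cycle structure of a connected component recalled in Sec.~II B: because $m$ is the distance of ${\bm x}^{(q)}$ to the periodic orbit, ${\bm F}^{m-1}({\bm x}^{(q)})$ still lies in the tree part and is not on any periodic orbit, whereas every forward iterate of ${\bm x}^{(r)}$, which lies on the periodic orbit, stays on that orbit. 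Along the way one should also note that ${\bm x}^{(r)}$ is well defined and distinct from ${\bm x}^{(q)}$: since ${\bm F}$ restricted to a period-$T$ orbit is a bijection, there is a unique orbit state sent to ${\bm x}^{(s)}$ after $m$ steps, and it cannot be ${\bm x}^{(q)}$, which is off the orbit.

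There is no real obstacle here; essentially all the content is in recognizing the pushforward action of $\hat{P}$ on the indicator basis from Eqs.~(\ref{PF_indicator})--(\ref{PF0}) and then reading off that ${\bm x}^{(q)}$ and ${\bm x}^{(r)}$ merge onto a common state after exactly $m$ steps. The only spot that needs a moment of care is the rank claim, where the tree-versus-cycle distinction is what prevents the two indicator functions from collapsing earlier than step $m$.
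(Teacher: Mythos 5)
Your proposal is correct and follows essentially the same route as the paper: apply the pushforward action $\hat{P} b_q = b_s$ from Eq.~(\ref{PF_indicator}) iteratively, so that both indicator functions collapse onto $b_s$ after $m$ steps and their difference is annihilated. Your additional verification that $\hat{P}^{m-1}c \neq 0$ (using the tree-versus-cycle structure) goes slightly beyond the paper, which asserts the rank without checking this non-vanishing, but it is a welcome and correct refinement rather than a different approach.
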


\begin{proof}

Since both ${\bm x}^{(q)}$ and ${\bm x}^{(r)}$ evolve into ${\bm x}^{(s)}$ in $m$ steps, from Eq.~(\ref{PF_indicator}),
\begin{align}
( \hat{P}^m c) ({\bm x}) = (\hat{P}^m b_q)({\bm x}) - (\hat{P}^m b_r)({\bm x}) = b_s({\bm x}) - b_s({\bm x}) = 0.
\end{align}
\end{proof}

The second lemma is for the eigenvalues on the unit circle.

\begin{lemma}
Assume that the system has a period-$T$ orbit $\chi = \{ {\bm x}^{(P_0)}, {\bm x}^{(P_1)}, ... ,{\bm x}^{(P_{T-1})} \}\subseteq M$, where ${\bm x}^{(P_T)} = {\bm x}^{(P_0)}$. 
Then, the function
\begin{align}
c({\bm x}) = \sum_{q =0}^{T-1} \lambda^{-q} b_{P_q}({\bm x})
\end{align}
is a Perron-Frobenius eigenfunction with the eigenvalue $\lambda = \exp(2\pi i k / T)$ ($k=0, 1, ..., T-1$), i.e.,
\begin{align}
( \hat{P} c) ({\bm x}) = \lambda c({\bm x}).
\end{align}
In particular, the eigenfunction with the eigenvalue 1 is simply given by 
\begin{align}
c({\bm x}) = \sum_{q=0}^{T-1} b_{P_q}({\bm x}).
\end{align}
\end{lemma}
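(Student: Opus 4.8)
The plan is to compute $\hat{P}c$ directly from the action of the Perron--Frobenius operator on indicator functions, Eq.~(\ref{PF_indicator}), which says that $\hat{P}b_q = b_s$ where $s$ is the index of the successor state ${\bm x}^{(s)} = {\bm F}({\bm x}^{(q)})$. Because ${\bm x}^{(P_0)},\dots,{\bm x}^{(P_{T-1})}$ lie on a period-$T$ orbit, the successor of ${\bm x}^{(P_q)}$ is ${\bm x}^{(P_{q+1})}$ with the index read modulo $T$, so $\hat{P}b_{P_q} = b_{P_{q+1}}$. Everything then reduces to a cyclic re-indexing of a finite sum, parallel to the derivation of the Koopman eigenfunction in Lemma~5 but with the direction of the shift reversed.

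First I would use linearity of $\hat{P}$ to write $\hat{P}c = \sum_{q=0}^{T-1}\lambda^{-q}\,\hat{P}b_{P_q} = \sum_{q=0}^{T-1}\lambda^{-q}b_{P_{q+1}}$. Second, I would substitute $q' = q+1$ to obtain $\sum_{q'=1}^{T}\lambda^{-(q'-1)}b_{P_{q'}} = \lambda\sum_{q'=1}^{T}\lambda^{-q'}b_{P_{q'}}$. Third, I would use $P_T = P_0$ together with $\lambda^{-T} = 1$ --- which holds precisely when $\lambda$ is a $T$-th root of unity, $\lambda = \exp(2\pi i k/T)$ with $k = 0,\dots,T-1$ --- to replace the $q'=T$ term by the $q'=0$ term, so that $\sum_{q'=1}^{T}\lambda^{-q'}b_{P_{q'}} = \sum_{q'=0}^{T-1}\lambda^{-q'}b_{P_{q'}} = c$. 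This gives $\hat{P}c = \lambda c$. Reading the chain backwards shows that the $T$-periodicity forces $\lambda^T = 1$, so these $T$ roots of unity are exactly the admissible eigenvalues, consistent with $\hat{P}$ sharing its spectrum with $\hat{K}$; the choice $k = 0$ gives $\lambda = 1$ and $c = \sum_{q=0}^{T-1}b_{P_q}$, the stated special case.

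I do not expect a genuine obstacle here: the statement is essentially a one-line consequence of Eq.~(\ref{PF_indicator}) once the index bookkeeping is carried out. The two points I would take care to spell out are (i) that the determinism of ${\bm F}$ makes $\hat{P}b_{P_q}$ equal to the single indicator $b_{P_{q+1}}$, which again lies on $\chi$, so $\hat{P}c$ stays supported on $\chi$ even though --- unlike Lemma~5 --- the orbit here need not be isolated (the tree states feeding into $\chi$ cause no trouble because $\hat{P}$ pushes forward along ${\bm F}$, not backward); and (ii) the sign of the exponent: the Perron--Frobenius eigenfunction carries $\lambda^{-q}$ whereas the Koopman eigenfunction of Lemma~5 carries $\lambda^{+q}$. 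This is not an error but reflects that $\hat{P}$ shifts the orbit labels forward, $b_{P_q}\mapsto b_{P_{q+1}}$, while $\hat{K}$ shifts them backward, $b_{P_q}\mapsto b_{P_{q-1}}$, consistent with $\hat{K}$ and $\hat{P}$ being mutually adjoint with the same eigenvalues but dual eigenfunctions; I would add a short remark noting this.
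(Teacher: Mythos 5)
Your proposal is correct and follows essentially the same route as the paper: both proofs apply Eq.~(\ref{PF_indicator}) to shift each indicator forward along the orbit, $\hat{P}b_{P_q}=b_{P_{q+1}}$, and then use $\lambda^{T}=1$ together with $P_T=P_0$ to close the cyclic sum and extract the factor $\lambda$. The only cosmetic difference is that the paper carries out the re-indexing through the adjacency-matrix entries $A_{P_q P_r}$ (and states explicitly that $A_{qP_r}=0$ off the orbit, the point you make in your remark (i)), whereas you do it by a direct change of summation variable.
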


\begin{proof}
Note that, for $q=0, ...,T-1$, $A_{P_q P_r} = \delta_{q,r+1}$ for $r=0, ..., T-2$ and $A_{P_q P_{r}} = \delta_{q,0}$ for $r=T-1$, because the states with indices $\{P_0, ..., P_{T-1}\}$ are on the period-$T$ orbit $\chi$.
Note also that $A_{q P_r} = 0$ when $q \notin \{P_0, ..., P_{T-1} \}$ for $r=0, ..., T-1$ because the states in $\chi$ cannot leave it.
By the definition of $\hat{P}$ and using the above properties of $A$, we have
\begin{align}
\hat{P} c({\bm x}) 
&= \sum_{q=0}^{T-1} \lambda^{-q} (\hat{P} b_{P_q})({\bm x}) 
= \sum_{q=0}^{T-1} \lambda^{-q} \sum_{r=0}^{T-1} A_{P_r P_q} b_{P_r}({\bm x})
= \sum_{q=0}^{T-1} \left( \sum_{r = 0}^{T-1} A_{P_q P_r} \lambda^{-r} \right) b_{P_q}({\bm x})
\cr
&=
\sum_{q = 0}^{T-1} \lambda^{-(q-1)} b_{P_q}({\bm x}) = \lambda \sum_{q = 0}^{T-1} \lambda^{-q} b_{P_q}({\bm x}) = \lambda c({\bm x}),
\end{align} 
where we used $\lambda^T = 1$. 
\end{proof}

Thus, similar to the Koopman eigenfunctions, we can construct all independent eigenfunctions of the Perron-Frobenius operator from the above lemmas.

\end{document}